\providecommand{\abs}[1]{\left|#1\right|}
 \providecommand{\F}{\mathbb{F}}
\date{}
\title{A new class of rank-metric codes and their list decoding beyond the unique decoding radius}
\author{Chaoping Xing \ {\it and} \ Chen Yuan}
\newtheorem{lemma}{Lemma}[section]
\newtheorem{theorem}[lemma]{Theorem}
\newtheorem{prop}[lemma]{Proposition}
\newtheorem{cor}[lemma]{Corollary}
\newtheorem{defn}{Definition}
\theoremstyle{remark}
\newtheorem{rmk}{Remark}
\renewcommand{\epsilon}{\varepsilon}
\renewcommand{\le}{\leqslant}
\renewcommand{\ge}{\geqslant}
\def\period{\Lambda}
\def\proj{\mathrm{proj}}
\newcommand{\mv}[1]{{\mathbf{#1}}}
\newcommand{\vnote}[1]{}
\def\MM{\mathbb{M}}
\def \mC {\mathcal{C}}
\def \mC {\mathcal{C}}
\def \mF {\mathcal{F}}
\def \mP {\mathcal{P}}
\def \Xi {{X^{[i]}}}
\newcommand{\Ga}{\alpha}
\newcommand{\Gg}{\gamma}     
\newcommand{\Ge}{\epsilon}
\newcommand{\Gk}{\kappa}
\newcommand{\Gl}{\lambda}    \newcommand{\GL}{\Lambda}
\newcommand{\Gz}{\zeta}
\newcommand{\Gr}{\rho}
\def\rank{{\rm rank}}
\def \bx {{\bf x}}
\def \bz {{\bf z}}
\def \bv {{\bf v}}
\def \bo {{\bf 0}}
\newcommand{\dims}{\kappa}
\begin{document}

\maketitle
\setcounter{page}{0}
\begin{center}  School of Physical \&  Mathematical Sciences, Nanyang Technological University,  Singapore.\\  Emails: {xingcp@ntu.edu.sg; yuan0064@e.ntu.edu.sg}
\end{center}

\begin{abstract}
Compared with classical block codes, efficient list decoding of rank-metric codes  seems more difficult.  The evidences to support this view include:  (i) so far people have not found polynomial time list decoding algorithms of rank-metric codes with decoding radius beyond  $(1-R)/2$ (where $R$ is the rate of code) if ratio of the number of rows over the number of columns is constant, but not very small; (ii) the Johnson bound for rank-metric codes does not exist as opposed to classical codes; (iii) the Gabidulin codes    can not be list decoded beyond half of minimum distance.
Although the list decodability of random rank-metric codes and limits to list decodability have been completely determined, little work on efficient list decoding rank-metric codes  has been done. The only known efficient list decoding of rank-metric codes $\mC$ gives decoding radius up to the Singleton bound $1-R-\Ge$  with positive rate $R$ when $\rho(\mC)$  is extremely small, i.e., $\Theta(\Ge^2)$ , where $\rho(\mC)$ denotes the ratio of the number of rows over the number of columns of $\mC$ \cite[STOC2013]{Guru2013}. It is commonly believed that list decoding of rank-metric codes $\mC$ with not small constant ratio $\rho(\mC)$ is hard.

The main purpose of the present paper is to explicitly construct a class of rank-metric codes $\mC$ with not small constant ratio $\rho(\mC)$
and efficiently list decode these codes with decoding radius beyond $(1-R)/2$. Specifically speaking, let $r$ be a prime power and let $c$ be an integer between $1$ and $r-1$. Let ${\Ge}>0$ be a small real. Let $q=r^{\ell}$ with $\gcd(r-1,\ell n)=1$. Then
there exists an explicit rank-metric code $\mC$ in $\MM_{n\times(r-1)n}(\F_q)$ with rate ${R}$ that is $({\tau}, O(\exp(1/{\Ge^2})))$-list decodable with ${\tau}=\frac c{c+1}\left(1-\frac{r-1}{r-c}\times {R}-{\Ge}\right)$. Furthermore, encoding and list-decoding algorithms are in polynomial time ${\rm poly}(n,\exp(1/{\Ge}))$. The list size can be reduced to $O(1/{\Ge})$ by randomizing  the algorithm. Note that the ratio $\rho(\mC)$ for our code $\mC$ is $1/(r-1)$.
Our key idea is to employ two-variable polynomials $f(x,y)$, where $f$ is linearized in variable $x$ and the variable $y$ is used to ``fold" the code. In other words, rows are used to correct rank errors and columns are used to ``fold" the code to enlarge decoding radius.
Apart from the above algebraic technique, we have to prune down the list. The algebraic idea enables us to pin down the messages
into a structured subspace of dimension linear in the number $n$ of columns. This  ``periodic" structure allows us to pre-encoding the messages to prune down the list. More precisely, we use subspace design introduced in \cite[STOC2013]{Guru2013} to get a deterministic algorithm with a larger constant list size and employ hierarchical subspace-evasive sets introduced in \cite[STOC2012]{Guru2012} to obtain a randomized algorithm with a smaller constant list size.
\end{abstract}

\newpage
\section{Introduction}
Rank-metric codes were first introduced by Delsarte in~\cite{Del} and have found applications in network coding~\cite{KK08} and public-key cryptography~\cite{GPT91,WXS03}. These codes are closely related to space-time codes over finite fields \cite{MV12,Guru2013}. Unique decoding algorithms for rank-metric codes within half minimum distance have been extensively studied \cite{FS,KK08}. However, efficient list decoding of rank-metric codes seems more difficult than that of classical block codes. There are several evidences to support this view. Firstly, people have  not found polynomial-time list decoding algorithms with decoding radius  beyond  $(1-R)/2$ (where $R$ is the rate of code) if ratio of the number of rows over the number of columns is a constant, but not very small. Secondly, the Johnson bound does not exist as opposed to classical codes \cite{RWZ15}. Thirdly, an important class of rank-metric codes introduced by  Gabidulin \cite{Ga85} that are similar to Reed-Solomon codes can not be list decoded beyond half of minimum distance \cite{RWZ15}. The purpose of this paper to design polynomial time list decoding algorithms for rank-metric codes with decoding radius  beyond  $(1-R)/2$.

Before introducing known results and our main results in this paper, we first define list decodability of a  rank-metric code. A rank-metric code over finite filed $\F_q$ is subset of $\MM_{n\times t}(\F_q)$, where $\MM_{n\times t}(\F_q)$ denotes the set of  $n\times t$ matrices over $\F_q$.
Without loss of generality, we always assume $t\ge n$ for a rank-metric code in $\MM_{n\times t}(\F_q)$.
\begin{defn} The rank-metric ball of center $M\in \MM_{n\times t}(\F_q)$ and radius $d$ is defined to be the set $\{X\in \MM_{n\times t}(\F_q):\; {\rm rank}(X-M)\le d\}$.
A rank-metric code $\mC$ is called $(\tau, L)$-list decodable if, for every matrix $M\in \MM_{n\times t}(\F_q)$, there is at most $L$ codewords of $\mC$ in the rank-metric ball of center $M$ of radius  $\tau n$.
\end{defn}

\subsection{Known results}
Unlike list decoding classical codes, there are very few results in literature for efficient list decoding of rank-metric codes. The only known efficient list decoding of rank-metric codes in the asymptotic sense gives decoding radius up to the Singleton bound $1-R-\Ge$ when ratio of the number of rows over the number of columns is $\Theta(\Ge^2)$ \cite[STOC2013]{Guru2013}. On the other hand, list decodability of random rank-metric codes and limits on list decodability of rank-metric codes are completely known \cite{D15,WZ13}. More precisely, we have the following result .

\begin{prop}(see \cite{D15})\label{prop:1.1} Let $n/t$ tend to a fixed constant $\Gr$. Then for any real $R\in (0,1)$, a rank-metric code $\mC\subseteq \MM_{n\times t}(\F_q)$ of rate $R$ that is  $(\tau, L)$-list decodable with $L={\rm poly}(n)$ must obey $R\le (1-\tau)(1-\rho\tau)$. On the other hand, with high probability a random rank-metric code of rate $R$ in $\MM_{n\times t}(\F_q)$ is $(\tau, O(1/\Ge))$-list decodable with $R= (1-\tau)(1-\rho\tau)-\Ge$ for any small real $\Ge>0$. In particular, if  $n/t$ tends to a fixed small constant $\Ge$, then with high probability a random rank-metric code of rate $R$ in $\MM_{n\times t}(\F_q)$ is $(1-R-\Ge, O(1/\Ge))$-list decodable.
\end{prop}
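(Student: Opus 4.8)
Both halves of the proposition reduce to a single ingredient: a tight estimate of the volume $V(r):=\bigl|\{X\in\MM_{n\times t}(\F_q):\rank(X)\le r\}\bigr|$ of a rank-metric ball. Using that the number of $n\times t$ matrices of rank exactly $i$ equals $\binom{n}{i}_q\prod_{j=0}^{i-1}(q^t-q^j)$ (the Gaussian binomial counting the column space, times the number of full-row-rank $i\times t$ matrices), together with $q^{i(n-i)}\le\binom{n}{i}_q\le 4\,q^{i(n-i)}$ and $q^{it}(1-o(1))\le\prod_{j=0}^{i-1}(q^t-q^j)\le q^{it}$, one checks that $V(r)$ is dominated by its top term $i=r$ (the exponent $i(n+t-i)$ being increasing for $i\le(n+t)/2$ and $r=\tau n\le n\le(n+t)/2$ since $t\ge n$). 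For $t=n/\Gr$ this gives $\log_q V(\tau n)=\tau n\,(n+t-\tau n)+O(n)=nt\,(1-\gamma)+o(nt)$, where $\gamma:=(1-\tau)(1-\Gr\tau)$ and the arithmetic is $\tau(\Gr+1-\Gr\tau)=\Gr\tau+\tau-\Gr\tau^2=1-\gamma$. I would deduce the converse from $V$ by an averaging argument and the achievability from $V$ by a union bound over received words.

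\emph{The converse.} Suppose $\mC\subseteq\MM_{n\times t}(\F_q)$ is $(\tau,L)$-list decodable. Choosing the received word $M$ uniformly at random, linearity of expectation shows the expected number of codewords within rank distance $\tau n$ of $M$ equals $|\mC|\,V(\tau n)/q^{nt}$; since \emph{every} $M$ carries at most $L$ such codewords, this expectation is $\le L$, so $|\mC|\le L\,q^{nt}/V(\tau n)$. Taking $\log_q$, dividing by $nt$, and inserting $|\mC|=q^{Rnt}$, the volume estimate, and $\log_q L=o(nt)$ (which holds whenever $L=q^{o(nt)}$, in particular for $L=\mathrm{poly}(n)$) gives $R\le\gamma+o(1)$, i.e.\ $R\le(1-\tau)(1-\Gr\tau)$ in the limit $n/t\to\Gr$.

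\emph{The achievability.} Let $\mC$ be a uniformly random subset of $\MM_{n\times t}(\F_q)$ of size $N=q^{Rnt}$ with $R=\gamma-\Ge$. The code fails to be $(\tau,L)$-list decodable precisely when some received word $M$ has $L+1$ codewords inside its radius-$\tau n$ ball. Union bounding over the $q^{nt}$ choices of $M$ and the $\binom{N}{L+1}\le N^{L+1}$ choices of codewords, and using that each fixed matrix lies in a fixed ball with probability $V(\tau n)/q^{nt}$ (the events for distinct codewords being negatively correlated under sampling without replacement -- or simply take $N$ i.i.d.\ uniform matrices), we obtain
\begin{equation*}
\Pr\bigl[\mC\text{ not }(\tau,L)\text{-list decodable}\bigr]\ \le\ q^{nt}\Bigl(\frac{N\,V(\tau n)}{q^{nt}}\Bigr)^{L+1}\ =\ q^{nt}\cdot q^{-\Ge\, nt(L+1)+o(nt)},
\end{equation*}
since $N\,V(\tau n)/q^{nt}=q^{nt(R-\gamma)+o(nt)}=q^{-\Ge\, nt+o(nt)}$. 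For any fixed $L$ with $\Ge(L+1)>1$, say $L=\ceil{2/\Ge}=O(1/\Ge)$, the exponent equals $nt\,(1-\Ge(L+1)+o(1))<0$ for large $n$, so this probability tends to $0$ and a random code of rate $\gamma-\Ge$ is $(\tau,O(1/\Ge))$-list decodable with high probability. The last sentence of the proposition is the special case $\Gr=\Ge$, $\tau=1-R-\Ge$: there $(1-\tau)(1-\Gr\tau)=(R+\Ge)\bigl(1-\Ge(1-R-\Ge)\bigr)\ge R+3\Ge/4>R$ (using $(R+\Ge)(1-R-\Ge)\le1/4$), so the achievability applies with gap at least $3\Ge/4$.

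\emph{Main obstacle.} Conceptually nothing is deep here; the work is in making the volume estimate tight to within a $q^{o(nt)}$ factor on both sides -- the upper bound $\binom{n}{i}_q\le4\,q^{i(n-i)}$ comes from $\prod_{k\ge1}(1-q^{-k})^{-1}<4$, and the lower bound $\prod_{j=0}^{i-1}(q^t-q^j)\ge q^{it}(1-o(1))$ uses that $t\ge n$ and $\tau<1$ force $t-\tau n=\Omega(n)$ -- and in observing that in the achievability the $q^{nt}$ factor lost to the number of received words is absorbed once the \emph{constant} $L$ exceeds $1/\Ge$, so the list size stays $O(1/\Ge)$ and does not grow with $n$. (The same argument does not give random \emph{linear} rank-metric codes for free, since the $(L+1)$-wise independence used above then fails; that case would need a more delicate argument exploiting the linear structure, but for unstructured random codes the plain union bound suffices.)
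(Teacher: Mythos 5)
The paper does not prove Proposition~\ref{prop:1.1}; it is quoted from Ding~\cite{D15} with only a citation, so there is no in-paper proof to compare against. Your argument is correct and is the standard one used for such rank-metric list-decoding capacity statements (and is essentially the route taken in~\cite{D15}): a tight asymptotic volume estimate $\log_q V(\tau n)=nt\,(1-(1-\tau)(1-\rho\tau))+o(nt)$ from the rank-exactly-$i$ count $\binom{n}{i}_q\prod_{j=0}^{i-1}(q^t-q^j)$, then a double-counting/averaging argument for the converse and a union bound over received matrices and $(L+1)$-subsets of iid codewords for the achievability, with $L=\Theta(1/\Ge)$ absorbing the factor $q^{nt}$. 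The reduction of the final sentence to the second via $(R+\Ge)(1-\Ge(1-R-\Ge))\ge R+3\Ge/4$ is also right. The one caveat you correctly flag yourself is that this union-bound argument covers unstructured random codes (iid codewords), not random $\F_q$-linear rank-metric codes, which is the reading the proposition intends; for random linear codes the $(L+1)$-wise independence fails and a different argument is required.
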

The above result tells that $R= (1-\tau)(1-\rho\tau)$ is the limit to the list decoding of rank-metric codes and moreover most random codes can achieve this limit. The question is how to explicitly construct these codes and efficiently list decode them. It is natural to start with the Gabidulin codes because they are very similar to the classical Reed-Solomon codes. Both of these two classes of codes are constructed from evaluations of polynomials.
As the Reed-Solomon codes can be list decoded up to the Johnson bound \cite{GS99}, people hoped to list decode the Gabidulin codes at least beyond half of the minimum distance, i.e., $\tau>(1-R)/2$. Unfortunately,
it was first shown in \cite{WZ13} that list decodability of the square Gabidulin codes does not exceed the  bound $\tau=1-\sqrt{R}$ and recently it was shown in \cite{RWZ15} that list decodability of the square Gabidulin codes does not exceed half of the minimum distance, i.e., $(1-R)/2$ for a certain family of parameters. This implies that decoding radius of list decoding the square Gabidulin codes is not better than unique decoding.

Inspired by good list decodability of the folded Reed-Solomon codes \cite{GR08}, people started to consider list decoding of folded Gabidulin codes \cite{MV12}. However, the rate of the folded Gabidulin  code in  \cite{MV12} tends to $0$. In 2013, Guruswami and Xing \cite{Guru2013} considered subcodes of the Gabidulin codes via point evaluation in a subfield and showed that list decodability of subcodes of the Gabidulin codes achieves the Singleton bound $\tau=1-R$. However,  the ratio $\Gr=n/t$ of the rank-metric code $\mC\subseteq \MM_{n\times t}(\F_q)$ constructed by Guruswami and Xing \cite{Guru2013}
is $\Theta(\Ge^2)$. This is slightly weaker than random  rank-metric codes where the ratio $\Gr=n/t$ can achieve $\Theta(\Ge)$. So it is still an open problem to explicitly construct rank-metric codes in $ \MM_{n\times t}(\F_q)$ with ratio $\Gr=n/t=\Theta(\Ge)$ and decoding radius $\tau=1-R-\Ge$ and efficiently list decode them.

There has been no much progress on a more interesting case where the ratio $\Gr=n/t$ is not  too small. Hence,
an even more important open problem in the topic of list decoding rank-metric codes is the following
\begin{quote}
{\bf Open Problem.}
For a given constant ratio $\Gr=n/t\in (0,1)$ (not very small), explicitly construct rank-metric codes of rate $R$ in $ \MM_{n\times t}(\F_q)$ with decoding radius $\tau>(1-R)/2$ and efficiently list decode them.
\end{quote}

\subsection{Our results}
The present paper  moves the first step  towards solving the above Open Problem.  We first construct explicit rank-metric codes and  then consider  list decoding  of these rank-metric codes. As a result, we present two decoding algorithms, one deterministic algorithm and one Monte Carlo algorithm. Both the algorithms give the same decoding radius  that is  bigger than $(1-R)/2$. More precisely, we have the followings.
\begin{theorem}{\bf (Main Theorem)}\label{main:1}
Let $r$ be a prime power and let $c$ be an integer between $1$ and $r-1$. Let ${\Ge}>0$ be a small real. Let $q=r^{\ell}$ with $\gcd(r-1,\ell n)=1$.
\begin{itemize}
\item[{\rm (i)}]
There exists an explicit rank-metric code in $\MM_{n\times(r-1)n}(\F_q)$ with rate ${R}$ that is $({\tau}, O(\exp(1/{\Ge^2})))$-list decodable with ${\tau}=\frac c{c+1}\left(1-\frac{r-1}{r-c}\times {R}-{\Ge}\right)$. Furthermore, encoding and list-decoding algorithms are in polynomial time ${\rm poly}(n,\exp(1/{\Ge}))$.
\item[{\rm (ii)}]
With high probability one can randomly sample a rank-metric code in $\MM_{n\times(r-1)n}(\F_q)$ with rate ${{R}}$ that is $({{\tau}}, O(1/{\Ge}))$-list decodable with ${{\tau}}=\frac c{c+1}\left(1-\frac{r-1}{r-c}\times {R}-{\Ge}\right)$. Furthermore, encoding and list-decoding algorithms are in polynomial time ${\rm poly}(n,\exp(1/{\Ge}))$.
\end{itemize}
\end{theorem}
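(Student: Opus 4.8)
The plan is to follow the algebraic list-decoding paradigm, adapted to the rank metric, using two ``directions'' at once: a linearized structure in one variable and a Guruswami--Rudra-style folding in a second. First I would fix the code. Let $\gamma$ be a generator of $\F_r^*$ (of order $r-1$), let $\alpha_1,\dots,\alpha_n\in\F_{q^n}$ be an $\F_q$-basis, and take as a message a bivariate polynomial $f(x,y)=\sum_{j=0}^{k-1}f_j(x)\,y^j$ in which each $f_j$ is a linearized polynomial in $x$ of bounded $q$-degree. The codeword of $f$ is the $n\times(r-1)n$ matrix over $\F_q$ whose column indexed by $(i,l)$ is the expansion of $f(\alpha_i,\gamma^l)\in\F_{q^n}$ in the chosen basis, for $1\le i\le n$ and $0\le l\le r-2$. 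Matching the number of free $\F_q$-coefficients of $f$ against $n\cdot(r-1)n$ yields the stated rate $R$ and pins down $k$ and the $x$-degree bound in terms of $R,r,c,\epsilon$; the hypothesis $\gcd(r-1,\ell n)=1$ is what guarantees irreducibility of the modulus polynomial entering the functional-equation step below, so that the relevant quotient ring is a field.

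Second is the interpolation step. Given a received matrix $M$ with $\rank(M-\bc(f))\le\tau n$ for some codeword $\bc(f)$, I would build a nonzero interpolation polynomial $Q(x,y,z_1,\dots,z_c)$ that is linearized in the block $(z_1,\dots,z_c)$, has controlled $(x,y)$-degree, and whose prescribed vanishing encodes agreement with $M$ in a way compatible with the rank bound. The point is that the rank condition forces the error matrix to have column space of $\F_q$-dimension at most $\tau n$, so projecting onto a complementary subspace of $\F_{q^n}$ of dimension at least $(1-\tau)n$ returns, in that projection, honest evaluations of a linearized polynomial --- this is the precise sense in which \emph{rows correct rank errors}. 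A dimension count then shows such a $Q$ exists and, once $\tau$ is below the stated threshold, that $Q\big(x,y,f(x,y),f(x,\gamma y),\dots,f(x,\gamma^{c-1}y)\big)$ vanishes identically; here the $y$-shifts by powers of $\gamma$ are the folding, so \emph{columns fold the code}. Reducing this identity modulo the irreducible modulus and invoking the Frobenius-type identity that expresses $f(x,\gamma^t y)$ through $f(x,y)$ in the quotient ring, I would conclude that the coefficient vector of $f$ lies in an explicit affine $\F_q$-subspace $V$ and --- crucially --- that $V$ is \emph{periodic}: its intersection with each coordinate block of constant size has constant dimension, while $\dim V$ itself is linear in $n$.

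Third is the pruning step, which turns $|V|$ (exponential in $n$) into a list of constant size. I would pre-encode messages so that their coefficient vectors are forced into a combinatorial structure tailored to cut down such periodic $V$: a subspace design in the sense of \cite[STOC2013]{Guru2013} for the deterministic algorithm, giving list size $O(\exp(1/\epsilon^2))$, or a hierarchical subspace-evasive set in the sense of \cite[STOC2012]{Guru2012} for the Monte Carlo algorithm, giving list size $O(1/\epsilon)$; in both cases the periodic block structure of $V$ is exactly the hypothesis these objects are built to exploit, and the pre-encoding costs only an $O(\epsilon)$-fraction of the dimension. The output list is then the constant-size set $V\cap(\text{pre-encoded messages})$, found by solving a linear system for $V$ --- in time ${\rm poly}(n,\exp(1/\epsilon))$ --- and running the pruning routine. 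Assembling the two variants gives parts (i) and (ii).

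The main obstacle is the middle step: making the folding-in-$y$ argument and the rank-error projection coexist, and tracking the interpolation parameters (degrees and number of interpolation points) precisely enough that the admissible $\tau$ comes out to exactly $\frac{c}{c+1}\big(1-\frac{r-1}{r-c}R-\epsilon\big)$ rather than something weaker. A closely related delicate point is checking that the coefficient vector of $f$ is genuinely confined to a subspace with the periodic structure the subspace-design and subspace-evasive machinery require; this is what permits pruning the list to a constant, and it is the reason the bivariate set-up --- linearized in $x$, folded in $y$ --- is adopted in the first place.
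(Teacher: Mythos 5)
Your high-level roadmap --- bivariate polynomial with a linearized direction (rank errors) and a multiplicative-shift direction (folding), an interpolation polynomial $Q$ linear in the auxiliary variables, reduction modulo the irreducible binomial $x^{r-1}-\gamma$ to extract a ``periodic'' affine space of candidate coefficient vectors, and pruning that space with a subspace design (deterministic) or an h.s.e.\ set (Monte Carlo) --- is indeed the route the paper takes, so the architecture of the proof is right.

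There is, however, a concrete parameter error in the middle step that would prevent you from landing on the stated radius. You build $Q(x,y,z_1,\dots,z_c)$ with $c$ auxiliary variables and plug in $c$ folding shifts $f(x,\gamma^t y)$, $t=0,\dots,c-1$. In the paper's dimension count the number of shifts $s$ and the ordinary $x$-degree bound $k$ are two \emph{independent} knobs, yielding a threshold proportional to $\frac{s(r-k)}{\,r-1+s(r-k)\,}\bigl(1-\tfrac{r-1}{k}R\bigr)$; the theorem's formula comes from taking $s=r-1$ (use \emph{all} $r-1$ cyclic shifts) and then identifying $c$ through the degree bound $k=r-c$, which gives exactly $\frac{c}{c+1}\bigl(1-\frac{r-1}{r-c}R\bigr)$. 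If instead you take $s=c$, the prefactor $\frac{c^2}{\,r-1+c^2\,}$ comes out, which is strictly smaller than $\frac{c}{c+1}$ for every $c<r-1$. So $c$ must not be the number of interpolation variables; it governs the folding-variable degree of the message polynomial.

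A smaller inaccuracy: you describe the candidate space $V$ as periodic with ``coordinate block of constant size.'' The block length is $\ell n(r-1)$ (equivalently $\ell n k$ after restricting to message degrees), i.e., linear in $n$; what is constant is the dimension $\leq s-1$ of the common subspace $W$ inside each block. The subspace-design and h.s.e.\ constructions are built precisely to tolerate growing block length with constant per-block dimension, so the pruning still goes through, but the statement as you wrote it mischaracterizes what makes the pruning possible. Finally, double-check the codeword matrix: it should be indexed with rows by the basis elements $\alpha_i$ (the linearized direction, so that $\F_q$-row operations translate into evaluations at $\F_q$-combinations of the $\alpha_i$) and column blocks by the folding points $\gamma^j$, each entry of $\F_{q^n}$ laid out along the block; your ``column indexed by $(i,l)$'' arrangement puts the linearized index on the column side, which no longer makes row combinations interact with the linearized structure in the way the agreement argument (the analogue of Lemma~\ref{lem:3.1} applied to $\bigl((\alpha_1,\dots,\alpha_n)^T,M_f\bigr)$) needs.
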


  \begin{rmk}\begin{itemize}
\item[(i)] In the above main theorem, if we fix $r$ and $c$ with $2\le c\le r-1$, then
\[\frac{c}{c+1}\left(1-\frac{r-1}{r-c}\times R\right)>\frac12(1-R)\]
for any $0\le R<\frac{r-c}{r+c}$. This means that our decoding radius  breaks the unique decoding radius for $R\in\left[0,\frac{r-c}{r+c}\right)$. For instance, taking $r=3$ and $c=2$ gives a rank-metric code $\mC\subseteq\MM_{n\times 2n}(\F_q)$ of rate $R$ and decoding radius $\tau=\frac23(1-2R)$ which is bigger than $\frac12(1-R)$ for $R<\frac15$. In this case, the ratio $\rho=n/t$ is $1/2$.
\item[(ii)] By Proposition \ref{prop:1.1}, a rank-metric code $\mC\subseteq\MM_{n\times t}(\F_q)$ of rate $R$ that is $(\tau,L)$-list decodable with $L={\rm poly}(n)$ must obey $R\le (1-\tau)(1-\rho \tau)$, where $\rho$ is the ratio $n/t$. In our case, the ratio $\rho=n/t=1/(r-1)$. Thus, we must have $R\le (1-\tau)\left(1-\frac {\tau}{r-1} \right)$. The decoding radius in the above theorem gives $R\approx \frac{r-c}{r-1}\left(1-\frac{c+1}c\times \tau\right)$ and indeed, one can easily check that
    \[\frac{r-c}{r-1}\left(1-\frac{c+1}c\times \tau\right)<(1-\tau)\left(1-\frac {\tau}{r-1} \right).\]
    \item[(iii)] Unfortunately, our main theorem does not improve the unique decoding bound for square rank-metric codes.  To get square matrices, $r$ has to be $2$. In this case, we can only take $c=1$. Then the decoding radius in the above main theorem gives $\tau=\frac12(1-R)$ which is the same as the unique decoding radius.
 \end{itemize}
 \end{rmk}
In the above theorem, setting $r=\Theta\left(\frac1{\Ge^2}\right)$ and $c=\Theta\left(\frac1{\Ge}\right)$ gives the following corollary.

\begin{cor}\label{cor:1.3}
Let ${\Ge}>0$ be a small real. Let $r=\Theta\left(\frac1{\Ge^2}\right)$ and  $q=r^{\ell}$ with $\gcd(r-1,\ell n)=1$.
\begin{itemize}
\item[{\rm (i)}]
There exists an explicit rank-metric code in $\MM_{n\times(r-1)n}(\F_q)$ with rate ${R}$ that is $({\tau},(1/{\Ge})^{O(\exp(1/{\Ge}^4))})$-list decodable with ${\tau}=1-R-{\Ge}$. Furthermore, encoding and list-decoding algorithms are in polynomial time ${\rm poly}(n,\exp(1/{\Ge}))$.
\item[{\rm (ii)}]
With high probability one can randomly sample a rank-metric code in $\MM_{n\times(r-1)n}(\F_q)$ with rate ${{R}}$ that is $({{\tau}}, O((1/{\Ge})))$.  Furthermore, encoding and list-decoding algorithms are in polynomial time ${\rm poly}(n,(\exp(1/{\Ge}))$.
\end{itemize}
\end{cor}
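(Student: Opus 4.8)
The plan is to obtain Corollary~\ref{cor:1.3} as a specialization of the Main Theorem (Theorem~\ref{main:1}) under a particular choice of the free parameters $r$ and $c$. Given the target $\epsilon>0$, I would first put $c:=\lceil c_0/\epsilon\rceil$ for a suitable absolute constant $c_0$, and then take $r$ to be a prime (hence prime power) lying in the interval $[\,C_0/\epsilon^2,\ 2C_0/\epsilon^2\,]$ for a large enough absolute constant $C_0$; such a prime exists by Bertrand's postulate, and for all sufficiently small $\epsilon$ it satisfies $r>c$, so that $c$ is an admissible value in Theorem~\ref{main:1}. With these choices $r=\Theta(1/\epsilon^2)$ and $c=\Theta(1/\epsilon)$, the resulting code lives in $\MM_{n\times(r-1)n}(\F_q)$ with ratio $\rho=n/t=1/(r-1)=\Theta(\epsilon^2)$, as required; the divisibility hypothesis $\gcd(r-1,\ell n)=1$ demanded by Theorem~\ref{main:1} is precisely the one assumed in Corollary~\ref{cor:1.3}, so no extra condition is introduced.

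Next I would invoke Theorem~\ref{main:1} with these $r,c$ and with the theorem's own accuracy parameter set to $\epsilon':=\epsilon/3$, obtaining a code that is $(\tau',L)$-list decodable with
\[
\tau'=\frac{c}{c+1}\Bigl(1-\frac{r-1}{r-c}\,R-\epsilon'\Bigr),
\]
and whose encoding and list-decoding algorithms run in time $\mathrm{poly}(n,\exp(1/\epsilon'))=\mathrm{poly}(n,\exp(1/\epsilon))$. The decoding-radius estimate is then routine: since $c=\Theta(1/\epsilon)$ we have $\frac{c}{c+1}=1-\frac1{c+1}\ge 1-\epsilon/4$, and since $r=\Theta(1/\epsilon^2)$ while $c=\Theta(1/\epsilon)$ we have $\frac{r-1}{r-c}=1+\frac{c-1}{r-c}\le 1+\epsilon/4$, provided $c_0$ and $C_0$ are taken large enough. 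Combining these (and using $R\le 1$ to absorb the contribution of $\frac{r-1}{r-c}R$) gives
\[
\tau'\ \ge\ \bigl(1-\tfrac{\epsilon}{4}\bigr)\bigl(1-R-\tfrac{\epsilon}{4}-\tfrac{\epsilon}{3}\bigr)\ \ge\ 1-R-\epsilon
\]
for every $R\in(0,1)$. Since a $(\tau',L)$-list decodable code is a fortiori $(\tau,L)$-list decodable for every $\tau\le\tau'$, and the decoding algorithm for radius $\tau'$ applies verbatim for the smaller radius, we may take $\tau:=1-R-\epsilon$ and conclude that the code has all the asserted properties except possibly for the exact form of the list size.

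It remains to pin down $L$. Part~(ii) is immediate, because Theorem~\ref{main:1}(ii) gives $L=O(1/\epsilon')=O(1/\epsilon)$. Part~(i) needs more care: the bound $O(\exp(1/{\epsilon'}^2))$ quoted in Theorem~\ref{main:1}(i) suppresses the dependence on $r$ (a fixed constant there), whereas here $r$ grows like $1/\epsilon^2$. I would therefore reopen the proof of Theorem~\ref{main:1}(i) and track how the deterministic list size coming from the subspace-design construction of \cite{Guru2013} depends on $r$, $c$, and $\epsilon'$; substituting $r=\Theta(1/\epsilon^2)$, $c=\Theta(1/\epsilon)$, and $\epsilon'=\epsilon/3$ into that explicit expression and simplifying is what produces the claimed bound $(1/\epsilon)^{O(\exp(1/\epsilon^4))}$. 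The main obstacle I anticipate is exactly this last bookkeeping step — making the dependence of the Theorem~\ref{main:1}(i) list size on $r$ fully explicit and verifying that the substitution collapses to $(1/\epsilon)^{O(\exp(1/\epsilon^4))}$ rather than to something larger; the radius computation, the existence of a prime power $r$ of the right order, and the transfer of the polynomial running-time bound are all straightforward.
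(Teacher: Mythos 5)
Your overall plan — specialize Theorem~\ref{main:1} with $r=\Theta(1/\Ge^2)$, $c=\Theta(1/\Ge)$, and a reduced accuracy parameter — is exactly what the paper does (the sentence immediately preceding Corollary~\ref{cor:1.3} and the proofs of Theorems~\ref{thm:4.6}, \ref{thm:4.8} via Remarks~\ref{rmk:5}, \ref{rmk:6}). Your decoding-radius computation, which shows $\frac{c}{c+1}\le 1-\Ge/4$ and $\frac{r-1}{r-c}\le 1+\Ge/4$ and then chains inequalities to reach $\tau'\ge 1-R-5\Ge/6$, is correct and, if anything, more careful than the paper's one-line claim. Part~(ii) is also handled correctly, since Theorem~\ref{main:1}(ii) already exhibits the $O(1/\Ge')=O(1/\Ge)$ dependence explicitly.

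The genuine gap is the one you yourself flag and then defer: you do not actually carry out the list-size accounting for part~(i), and you phrase it as a verification that the answer ``collapses'' to $(1/\Ge)^{O(\exp(1/\Ge^4))}$ — which suggests you expect the bound to come out tight, when in fact the substitution yields something considerably smaller. Here is the missing step. In the proof of Theorem~\ref{thm:4.6} the list size is recorded before being simplified as $r^{O(s^2/\Ge_{\mathrm{prf}}^{\,2})}$, where $\Ge_{\mathrm{prf}}$ is the accuracy parameter fed into Proposition~\ref{prop:4.5} (and $s=r-1$, $k=r-c$); the paper then collapses this to $\exp(O(1/\Ge^2))$ only under the standing assumption that $r,s$ are constants. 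In the regime of the Corollary, $s=r-1=\Theta(1/\Ge^2)$ and $\Ge_{\mathrm{prf}}=\Theta(\Ge)$ (since $\widetilde{\Ge}=\frac{r-1}{r-c}\cdot 2\Ge_{\mathrm{prf}}$ and $\frac{r-1}{r-c}=1+O(\Ge)$), so
\[
L \;=\; r^{O(s^2/\Ge_{\mathrm{prf}}^{\,2})} \;=\; \bigl(\Theta(1/\Ge^2)\bigr)^{O(1/\Ge^6)} \;=\; (1/\Ge)^{O(1/\Ge^6)},
\]
which is (vastly) dominated by $(1/\Ge)^{O(\exp(1/\Ge^4))}$. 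So the claimed list-size bound holds, but as a very loose upper bound; you should state this substitution explicitly rather than leave it as an anticipated obstacle, and you should not expect the arithmetic to land exactly on the advertised $\exp(1/\Ge^4)$ exponent. (For what it is worth, the paper's own Remark~\ref{rmk:5} sets $k/(r-1)=\Ge/2$, which is not the same as $c=\Theta(1/\Ge)$; your parameter choice, which follows the sentence before the Corollary, is the one that actually delivers $\tau\approx 1-R-\Ge$ uniformly in $R$.)
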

  \begin{rmk} \begin{itemize}
  \item [(i)] See Remarks \ref{rmk:5} and \ref{rmk:6} for discussion of the list sizes in Corollary \ref{cor:1.3}.
  \item [(ii)]The ratio in the above corollary is $\rho=n/t=1/(r-1)=\Theta(\Ge^2)$. This ratio is the same as the one in  \cite[STOC2013]{Guru2013}. Thus, the above corollary matches the result of \cite[STOC2013]{Guru2013}.
\end{itemize}
 \end{rmk}

\subsection{Our techniques}
 It was shown in \cite{RWZ15} that list decodability of a Gabidulin codes is not beyond the unique decoding bound $\tau=(1-R)/2$. In the classical case of Reed-Solomon codes, the decoding radius can be enlarged by folding Reed-Solomon codes. The question is how to properly fold  Gabidulin codes  to enlarge decoding radius. At the same time, we have to make use of linearized polynomials in order to correct rank errors. Our key idea is to employ two-variable polynomials $f(x,y)$, where $f$ is linearized in variable $x$ and the variable $y$ is used to fold the code. In other words, rows are used to correct rank errors and columns are used to fold the code to enlarge decoding radius.

 The algebraic idea enables us to pin down the messages
into a structured subspace of dimension linear in the number $n$ of columns and this  ``periodic" structure allows us to pre-encode the messages to prune down the list. Two approaches are employed to pin down our list, namely
 subspace design introduced in \cite[STOC2013]{Guru2013}  and hierarchical subspace-evasive (h.s.e. for short) sets introduced in \cite[STOC2012]{Guru2012}.
 The coefficients of polynomials in the list form a ``periodic" subspace. After pre-encoding with  subspace design or h.s.e., the new list becomes a constant.

\subsection{Organization}
The paper is organized as follows. In Section 2, we provide a new construction of ``folded" rank-metric codes and discuss their parameters. Section 3 devotes to list decoding of the rank-metric codes in Section 2, including establishment of interpolation polynomial, solving of certain equations for list and discussion of decoding radius. In the last section, we make use of subspace design and hierarchical subspace-evasive sets to pre-encode the messages and pin down the list. The algorithm from subspace design is deterministic, while the algorithm from hierarchical subspace-evasive sets is Monte Carlo.

\section{Construction of rank-metric codes}
\subsection{Rank-metric codes}
Before introducing our construction, we review some basic facts and results on rank-metric code.

Let $q$ be a prime power and denote by $\MM_{n\times t}(\F_q)$ the set of $n\times t$ matrices over $\F_q$. One can define the rank distance between two matrices $A, B\in \MM_{n\times t}(\F_q)$ to be the rank of $A-B$, i.e., $d(A,B)=\rank(A-B)$. Indeed this defines a distance \cite{Ga85}. A rank-metric code $\mC$ is a subset of $\MM_{n\times t}(\F_q)$
 with rate and distance given by
\[R(\mathcal{C}) = \frac{\log_q\abs{\mathcal{C}}}{nt}\quad \text{ and }\quad d(\mathcal{C}) = \min_{A\neq B\in\mathcal{C}}\{d(A,B)\}.\]
Without loss of generality, from now on we may assume that $n\le t$ (otherwise, we can consider transpose of matrices). As in the classical case, one has the following Singleton bound (see \cite{Ga85})
\begin{equation}\label{eq:a1} d(\mC)\le n-R(\mC)n+1.
\end{equation}
A code archiving the above Singleton bound is called Maximal Rank Distance (or MRD for short) code.  The most famous MRD codes are Gabidulin codes which are defined by using polynomial evaluations. Recently, some MRD codes other than Gabidulin codes have been constructed \cite{S15}.

To better understand our codes, we briefly review the construction of Gabidulin codes \cite{Ga85}. A polynomial of the form $f(x)=\sum_{i=0}^{\ell}a_ix^{q^i}$ is called  $q$-linearized, where coefficients $a_i$ belong to the algebraic closure of $\F_q$. The $q$-degree of $f(x)$, denoted by $\deg_q(f)$, is defined to be $\ell$ if $a_{\ell}\not=0$.

 Let $0<k\leq n\leq t$ be integers, and choose $\F_q$-linearly independent elements $\alpha_1,\dotsc, \alpha_n\in\F_{q^t}$. For every $q$-linearized polynomial $f\in \F_{q^t}[X]$ of $q$-degree at most $k-1$, we can encode $f$ by the column vector $A_f=\bigl( f(\alpha_1),\dotsc, f(\alpha_n)\bigr)^T$ over $\F_{q^t}$. By fixing a basis of $\F_{q^t}$ over $\F_q$, we can also think of $A_f$ as an $n\times t$ matrix over $\F_q$. This yields the Gabidulin code
\[\mathcal{C}_G(q,n,t,k) := \{A_f\in \mathbb{M}_{n\times t}(\F_q):\; f\in\F_{q^t}[x]\ \mbox{is $q$-linearized and} \ \deg_q(f)\leq k-1\}.\]
The  Gabidulin codes are similar to the classical Reed-Solomon codes. However, if applying Sudan's list decoding idea to decoding of the Gabidulin codes, we get only unique decoding (see \cite{KK08}).

In order to enlarge list decoding radius of the Gabidulin codes, Mahdavifar and Vardy \cite{MV12} considered folded Gabidulin codes. As a result, the rate tends to $0$. In the next subsection, we consider evaluations of two-variable polynomials to obtain rank-metric codes with good list decodabiity.

\subsection{Construction}
Let us fix some notations at the beginning.
Let $n,m$ be positive integers with $m\le n$ ($m$ and $n$ are propositional and both  tend to $\infty$). Let $r$ be a prime power and choose a positive integer $k$ with $ k\le r-1$ (both $r$ and $k$ are constant and independent of $n,m$). Put $q=r^{\ell}$ for some $\ell$ with $\gcd(r-1, n\ell)=1$ ($\ell$ is a constant and hence $q$ is a constant as well). Fix a primitive element $\Gg$ of $\F_r^*$.

We have the following  facts:
\begin{itemize}
\item $x^{r-1}-\Gg$ is irreducible over $\F_r$, and hence it is irreducible over $\F_{q^n}$ as well since $\gcd(r-1, n\ell)=1$.
\item $x^r\equiv \Gg x\mod{x^{r-1}-\Gg}$.
\end{itemize}

Consider the two-variable polynomial space over $\F_{q^n}$
\[\mP_q(n,k,m)[x,y]:=\left\{\sum_{i=0}^{m-1}f_i(x)y^{q^i}:\; f_i(x)\in\F_{q^n}[x]\ \mbox{and $\deg(f_i(x))\le k-1$ for all $0\le i\le m-1$}\right\}.\]
Let  $\{\Ga_1,\Ga_2,\dots,\Ga_n\}$ be an $\F_q$-basis of $\F_{q^n}$.  For each polynomial $f=\sum_{i=0}^{m-1}f_i(x)y^{q^i}\in \mP_q(n, k,m)[x,y]$, we define a matrix
\[M_f:=\left(\begin{array}{ccccc}
f(1,\Ga_1)& f(\Gg,\Ga_1) & f(\Gg^2,\Ga_1)&\cdots&f(\Gg^{r-2},\Ga_1)\\
f(1,\Ga_2)& f(\Gg,\Ga_2) & f(\Gg^2,\Ga_2)&\cdots&f(\Gg^{r-2},\Ga_2)\\
\cdots& \cdots & \cdots&\cdots&\cdots\\
f(1,\Ga_n)& f(\Gg,\Ga_n) & f(\Gg^2,\Ga_n)&\cdots&f(\Gg^{r-2},\Ga_n)\\
\end{array}
\right)\]
Each entry in the above matrix is viewed as a row vector of $\F_q^n$. Thus, $M_f$ is an $n\times ((r-1)n)$ matrix over $\F_q$. Set $t=(r-1)n$. Let $\mC_q(n,k,m,r)$ be the collection of $M_f$ for all $f\in \mP_q(n, k,m)[x,y]$.
\begin{lemma} \label{lem:2.1}
The distance and rate of $\mC_q(n,k,m,r)$ satisfy
\[d(\mC_q(n,k,m,r))\ge n-m+1\quad \mbox{and}\quad \quad R(\mC_q(n,k,m,r)):=\frac{\log_qq^{nkm}}{(r-1)n^2}=\frac{k}{r-1}\times\frac{m}{n},\]
respectively.
\end{lemma}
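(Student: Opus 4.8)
The plan is to bound the rate and distance separately. For the rate, I would simply count the number of codewords: a polynomial $f=\sum_{i=0}^{m-1}f_i(x)y^{q^i}$ in $\mP_q(n,k,m)[x,y]$ is determined by the $m$ polynomials $f_i(x)\in\F_{q^n}[x]$, each of degree at most $k-1$, hence by $km$ elements of $\F_{q^n}$, i.e.\ by $km\cdot n$ elements of $\F_q$. So $|\mC_q(n,k,m,r)|\le q^{nkm}$; to get equality I must check that the assignment $f\mapsto M_f$ is injective, which will follow from the distance bound below (a code with positive minimum distance separates distinct codewords). Plugging $|\mC|=q^{nkm}$ into $R(\mC)=\log_q|\mC|/(nt)$ with $t=(r-1)n$ gives $R=\frac{nkm}{(r-1)n^2}=\frac{k}{r-1}\cdot\frac{m}{n}$, as claimed.

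The substantive part is the distance bound $d(\mC_q(n,k,m,r))\ge n-m+1$. Since the code is $\F_q$-linear (the polynomial space is an $\F_q$-vector space and $f\mapsto M_f$ is $\F_q$-linear), it suffices to show that every nonzero $M_f$ has rank at least $n-m+1$, equivalently that the $\F_q$-kernel of $M_f$ (acting on the left, i.e.\ the left null space of rows) has dimension at most $m-1$. Here I would exploit the key structural feature: each row of $M_f$ is indexed by a basis element $\Ga_j$, and the $j$-th row is the vector $\bigl(f(\Gg^s,\Ga_j)\bigr)_{s=0}^{r-2}$, each coordinate regarded as an element of $\F_{q^n}\cong\F_q^n$. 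A left $\F_q$-linear combination $\sum_j \lambda_j(\text{row }j)$ with $\lambda_j\in\F_q$ produces, in coordinate $s$, the element $\sum_j\lambda_j f(\Gg^s,\Ga_j)$. Because $f$ is $q$-linearized in $y$ and the $\lambda_j$ lie in $\F_q$, this equals $f\bigl(\Gg^s,\sum_j\lambda_j\Ga_j\bigr)=f(\Gg^s,\beta)$ where $\beta=\sum_j\lambda_j\Ga_j\in\F_{q^n}$, and $\beta\ne 0$ whenever the $\lambda_j$ are not all zero, since the $\Ga_j$ form an $\F_q$-basis. So the left null space of $M_f$ corresponds bijectively (via $\lambda\mapsto\beta$) to the set of $\beta\in\F_{q^n}$ such that $f(\Gg^s,\beta)=0$ for all $s=0,1,\dots,r-2$.

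Thus I must show: if $f\ne 0$, then the set $\{\beta\in\F_{q^n}:\ f(\Gg^s,\beta)=0\text{ for all }0\le s\le r-2\}$ is an $\F_q$-subspace of dimension at most $m-1$. For each fixed $s$, the map $\beta\mapsto f(\Gg^s,\beta)=\sum_{i=0}^{m-1}f_i(\Gg^s)\beta^{q^i}$ is a $q$-linearized polynomial in $\beta$ of $q$-degree at most $m-1$, hence its kernel over $\F_{q^n}$ has dimension at most $m-1$ as an $\F_q$-space (a $q$-linearized polynomial of $q$-degree $\le m-1$ has at most $q^{m-1}$ roots). So it remains only to argue that for some $s\in\{0,\dots,r-2\}$ this $q$-linearized polynomial in $\beta$ is not identically zero, i.e.\ that not all coefficients $f_i(\Gg^s)$ vanish for every $s$. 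Since $f\ne0$, some $f_i(x)$ is a nonzero polynomial of degree $\le k-1\le r-2$, so it has at most $r-2$ roots in $\overline{\F_r}$; as $1,\Gg,\Gg^2,\dots,\Gg^{r-2}$ are $r-1$ distinct elements of $\F_r^*$, there is an $s$ with $f_i(\Gg^s)\ne0$, and for that $s$ the polynomial $f(\Gg^s,\beta)$ in $\beta$ is a nonzero $q$-linearized polynomial of $q$-degree $\le m-1$. Hence the common-root space has dimension $\le m-1$, so $\mathrm{rank}(M_f)\ge n-(m-1)=n-m+1$, which also gives injectivity of $f\mapsto M_f$ and completes both claims.

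The main obstacle, and the step that most needs care, is the interchange $\sum_j\lambda_j f(\Gg^s,\Ga_j)=f(\Gg^s,\sum_j\lambda_j\Ga_j)$: this relies crucially on $q$-linearity of $f$ in the variable $y$ together with $\lambda_j\in\F_q$ (so that $\lambda_j^{q^i}=\lambda_j$), and it is exactly what converts a statement about the rank of an $n\times t$ matrix over $\F_q$ into a clean statement about common roots of a family of $q$-linearized polynomials over $\F_{q^n}$. Everything after that is a routine application of the root bound for linearized polynomials plus the pigeonhole observation that a low-degree nonzero $f_i(x)$ cannot vanish at all of $1,\Gg,\dots,\Gg^{r-2}$.
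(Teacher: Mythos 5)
Your proposal is correct and is essentially the paper's argument, just phrased directly instead of by contradiction: you identify the left null space of $M_f$ with the common $\F_q$-kernel of the family of $q$-linearized polynomials $\beta\mapsto f(\Gg^s,\beta)$, invoke the root bound for linearized polynomials, and observe that a nonzero $f_i$ of degree $\le k-1\le r-2$ cannot vanish at all $r-1$ points $1,\Gg,\dots,\Gg^{r-2}$, exactly as the paper does. You are slightly more careful than the paper in noting that injectivity of $f\mapsto M_f$ (needed for the rate computation) is supplied by the distance bound.
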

\begin{proof} The size of $\mP_q(n, k,m)[x,y]$ is $q^{nkm}$. Furthermore, it is easy to see that $\mC_q(n,k,m,r)$ is an $\F_q$-linear space. Hence it is sufficient to show that the rank of $M_f$ is at least $n-m+1$ for every nonzero polynomial $f(x,y)\in \mP_q(n, k,m)[x,y]$.

Let $f=\sum_{i=0}^{m-1}f_i(x)y^{q^i}$ in $\mP_q(n, k,m)[x,y]$ be a nonzero polynomial.  Suppose  that $M_f$ has rank less than $n-m+1$. Then the solution space $U\subseteq\F_q^n$ of $\bz M_f=\bo$ has dimension at least $m$.  Let $V$ be the $\F_q$-subspace of $\F_{q^n}$ given by $V=\{\sum_{i=1}^nu_i\Ga_i:\; (u_1,u_2,\dots,u_n)\in U\}$. Then $\dim_{\F_q}(V)=\dim_{\F_q}(U)\ge m$.

 For each $0\le j\le r-2$, Let $g_{j}(y)=f(\Gg^{j},y)$. Then, every $\Ga$ in $V$ is a root of the polynomial $g_{j}(y)$.  Since  $\deg(g_{j}(y))\leq m-1$, the polynomial $f(\Gg^j,y)=g_{j}(y)$ is identical to $0$. This means that the coefficients $f_{i}(\Gg^{j})$ of $g_{j}(y)$ are zero for any $0\le i\le m-1$.  As the degree of $f_{i}(x)$ is at most $k-1$, we conclude that $f_i(x)$ are the zero polynomials for all $0\le i\le m-1$. This is a contradiction and the proof is completed.
\end{proof}

\begin{rmk} The code $\mC_q(n,k,m,r)$ is an MRD code if and only if $k=r-1$.
\end{rmk}
\section{List decoding}
Suppose that a codeword $M_f $ is transmitted and $Y=(y_{i,j})_{1\le i\le n; 0\le j\le r-2}$ is received with at most $e$ errors, i.e., $\rank(M_f-Y)\le e$. Our goal in this section is to recover $M_f$, or equivalently the polynomial $f(x,y)\in \mP_q(n,k,m)[x,y]$. First we prove a lemma on rank of matrices.

\begin{lemma}
\label{lem:3.1}
Let $X,Z\in \MM_{n\times t}(\F_q)$ with $\mathrm{rank}(X-Z)\le e$. Then $\dim_{\F_q}(\langle X\rangle\cap\langle Z\rangle)\ge \dim_{\F_q}(\langle X\rangle)-e$, where $\langle X\rangle$ stands for the row space of $X$ over $\F_q$.
\end{lemma}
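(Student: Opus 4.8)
The plan is to exploit the basic fact that the row space of a matrix is controlled by the matrix up to a correction term whose size is bounded by the rank of the difference. Write $E = X - Z$, so $\rank(E) \le e$. The row space $\langle X\rangle = \langle Z + E\rangle$ is contained in $\langle Z\rangle + \langle E\rangle$, since each row of $X$ is the sum of the corresponding row of $Z$ and the corresponding row of $E$. Hence $\langle X\rangle \subseteq \langle Z\rangle + \langle E\rangle$, and of course $\langle X\rangle \subseteq \langle X\rangle$ trivially; the point is to intersect these to locate $\langle X\rangle \cap \langle Z\rangle$ inside $\langle X\rangle$.

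The key step is a dimension count. Consider the quotient map $\pi\colon \langle X\rangle \to (\langle Z\rangle + \langle E\rangle)/\langle Z\rangle$ obtained by restricting the natural projection. The image of $\pi$ lies in $(\langle Z\rangle + \langle E\rangle)/\langle Z\rangle$, which is isomorphic to $\langle E\rangle/(\langle E\rangle \cap \langle Z\rangle)$ and therefore has $\F_q$-dimension at most $\dim_{\F_q}\langle E\rangle = \rank(E) \le e$. The kernel of $\pi$ is exactly $\langle X\rangle \cap \langle Z\rangle$. By rank-nullity applied to $\pi$,
\[
\dim_{\F_q}(\langle X\rangle) = \dim_{\F_q}\ker(\pi) + \dim_{\F_q}\Im(\pi) \le \dim_{\F_q}(\langle X\rangle \cap \langle Z\rangle) + e,
\]
which rearranges to the claimed inequality $\dim_{\F_q}(\langle X\rangle \cap \langle Z\rangle) \ge \dim_{\F_q}(\langle X\rangle) - e$.

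I do not expect any serious obstacle here; this is essentially a one-line linear algebra fact dressed up as a lemma, and the only thing to be careful about is making sure the map $\pi$ is well-defined (it is, because $\langle X\rangle \subseteq \langle Z\rangle + \langle E\rangle$) and that the bound on $\dim \Im(\pi)$ genuinely comes from $\dim\langle E\rangle \le e$ rather than something larger. An alternative, equally clean route is to pick a basis of $\langle X\rangle \cap \langle Z\rangle$, extend it to a basis of $\langle X\rangle$, and observe that each added basis vector $v$ satisfies $v = v' + w$ with $v' \in \langle Z\rangle$ forced to be... — actually the quotient argument above is the cleanest, so I would present that one. The lemma will then be used in the list-decoding analysis to guarantee that, after at most $e$ rank errors, a large-dimensional subspace of the transmitted codeword's row space survives inside the received word's row space, which is what lets the interpolation step pin down $f(x,y)$.
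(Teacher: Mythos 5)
Your proof is correct, and it establishes the same elementary dimension-counting fact the paper does, just with slightly different bookkeeping: you map $\langle X\rangle$ into the quotient $(\langle Z\rangle + \langle X-Z\rangle)/\langle Z\rangle$ and apply rank--nullity, whereas the paper observes the equality $\langle X\rangle + \langle Z\rangle = \langle X-Z\rangle + \langle Z\rangle$ and applies the inclusion--exclusion dimension formula to both sides. Both hinge on the same core observation (the row space of $X$ sits inside $\langle Z\rangle + \langle X-Z\rangle$, and the latter exceeds $\langle Z\rangle$ by at most $\rank(X-Z)\le e$ dimensions), so this is essentially the paper's argument.
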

\begin{proof} It is easy to see that the two $\F_q$-spaces   $\langle X\rangle+\langle Z\rangle$ and $\langle X-Z\rangle+\langle Z\rangle$ are equal. Thus,
\[\dim_{\F_q}(\langle X\rangle)+\dim_{\F_q}(\langle Z\rangle)-\dim_{\F_q}(\langle X\rangle\cap \langle Z\rangle)=\dim_{\F_q}(\langle X-Z\rangle)+\dim_{\F_q}(\langle Z\rangle)-\dim_{\F_q}(\langle X-Z\rangle\cap \langle Z\rangle).\]
This gives
\[\dim_{\F_q}(\langle X\rangle\cap \langle Z\rangle)=\dim_{\F_q}(\langle X\rangle)-\dim_{\F_q}(\langle X-Z\rangle)+\dim_{\F_q}(\langle X-Z\rangle\cap \langle Z\rangle)\ge \dim_{\F_q}(\langle X\rangle)-e.\]
The proof is completed.
\end{proof}

\subsection{Interpolation polynomials}


We fix a parameter $s$ with $1\le s\le r-1$.

\begin{defn}[Space of interpolation  polynomials]
Let $\mathcal{L}$ be the space of  polynomials $Q \in \F_{q^n}[x,y,z_1,z_2,$ $\dots,z_s]$ of the form
$Q(x, y,z_1,z_2,\dots,z_s) =
A_0(x,y) + A_1(x,z_1)  + A_2(x,z_2)  + \cdots + A_s(x,z_s)$,
with $A_0(x,y)\in \mP_q(n, r-1, n-e)[x,y]$ and
each $A_i(x,z_i) \in \mP_q(n, r-k, n-e-m+1)[x,z_i]$ for $i=1,2,\dots,s$.
\end{defn}
\begin{lemma}
\label{lem:3.2} If $e<\frac{s(r-k)(n-m+1)}{r-1+s(r-k))}$, then
there exists a nonzero polynomial $Q \in \mathcal{L}$ such that $Q(\Gg^{j},\Ga_i,y_{i,j},$ $y_{i,j+1},\ldots,y_{i,j+s-1}) = 0$ for $i=1,2,\dots,n$ and $j=0,1,2,\dots,r-2$. Note that if $j+s-1$ is bigger than $r-2$, we replace $y_{i,j+s-1}$ by $y_{i,j+s-1\bmod{r-1}}$.
Furthermore, such a polynomial $Q$ can be found using $O(n^4)$ operations over $\F_{q^n}$.
\end{lemma}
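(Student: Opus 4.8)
The plan is the standard Sudan-style interpolation argument: the prescribed identities form a homogeneous system of $\F_{q^n}$-linear equations in the coefficients of $Q$, and one checks that it has strictly fewer equations than unknowns, so that a nonzero solution exists and can be found by linear algebra.

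First I would treat the coefficients of $Q$ as indeterminates over $\F_{q^n}$. Writing $A_0(x,y)=\sum_{b=0}^{n-e-1} g_b(x)\,y^{q^b}$ with $\deg g_b\le r-2$, and, for $l=1,\dots,s$, $A_l(x,z_l)=\sum_{b=0}^{n-e-m} h_{l,b}(x)\,z_l^{q^b}$ with $\deg h_{l,b}\le r-k-1$, the space $\mathcal{L}$ has
\[
\dim_{\F_{q^n}}\mathcal{L}=(r-1)(n-e)+s(r-k)(n-e-m+1).
\]
(The hypothesis on $e$ gives $\tfrac{s(r-k)}{r-1+s(r-k)}<1$, hence $e\le n-m$, so both index ranges above are nonempty.) Each prescribed identity $Q(\Gg^{j},\Ga_i,y_{i,j},y_{i,j+1},\dots,y_{i,j+s-1})=0$ is a single $\F_{q^n}$-linear equation in these unknowns: every occurrence of $y$ and of each $z_l$ in $Q$ is $q$-linearized, so after substituting $x=\Gg^j$, $y=\Ga_i$ and $z_l=y_{i,j+l-1}$ (with indices read modulo $r-1$) the monomial $x^a y^{q^b}$ contributes the scalar $\Gg^{ja}\Ga_i^{q^b}$ and $x^a z_l^{q^b}$ contributes $\Gg^{ja}\,y_{i,j+l-1}^{q^b}$, each linear in the corresponding coefficient. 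There are exactly $n(r-1)$ such identities, one for each pair $(i,j)$ with $1\le i\le n$ and $0\le j\le r-2$; the cyclic convention on the last $z$-indices only changes which received entry is substituted, not the number of equations.

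A nonzero $Q$ therefore exists whenever $\dim_{\F_{q^n}}\mathcal{L}>n(r-1)$, i.e.
\[
(r-1)(n-e)+s(r-k)(n-e-m+1)>(r-1)n ,
\]
which rearranges to $s(r-k)(n-m+1-e)>(r-1)e$, i.e. $e\bigl(r-1+s(r-k)\bigr)<s(r-k)(n-m+1)$ — precisely the stated bound on $e$. For the running time I would assemble the coefficient matrix of this system, which is $O(n)\times O(n)$ since $r,k,s$ are constants and whose entries are the monomial evaluations above (obtained by iterating the Frobenius $x\mapsto x^q$ on the $\Ga_i$ and on the $y_{i,j}$), and then extract a nonzero kernel vector by Gaussian elimination over $\F_{q^n}$, in $O(n^4)$ operations over $\F_{q^n}$.

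I do not expect a real obstacle: this is the usual interpolation-existence lemma. The only care needed is bookkeeping — keeping the two $x$-degree budgets ($r-1$ for $A_0$ versus $r-k$ for each $A_l$) and the two lengths ($n-e$ versus $n-e-m+1$) apart so that the dimension count collapses exactly to $e<\frac{s(r-k)(n-m+1)}{r-1+s(r-k)}$. These specific choices are not forced by the present lemma; they are dictated by the next step (recovering $f$ from $Q$), so here one just verifies that they make the count work.
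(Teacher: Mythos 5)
Your proof is correct and follows the same approach as the paper: count $\dim_{\F_{q^n}}\mathcal{L}=(r-1)(n-e)+s(r-k)(n-e-m+1)$, observe that the $n(r-1)$ interpolation conditions are $\F_{q^n}$-linear in the coefficients of $Q$, and show the hypothesis on $e$ is exactly what makes unknowns outnumber equations, so a nonzero kernel vector exists and is found by Gaussian elimination in $O(n^4)$. The extra details you add (explicit rearrangement of the inequality, nonemptiness of the index ranges) are fine and consistent with the paper's implicit assumptions.
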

\begin{proof}
Note that $\mathcal{L}$ is an $\F_{q^n}$-vector space of dimension $(r-1)(n-e)+s(r-k)(n-e-m+1)$. This dimension is bigger than $n(r-1)$ by our choice of $m$ and $k$.
The conditions to be satisfied in the Lemma give rise to $n(r-1)$ homogeneous linear conditions on $Q$. Since  $n(r-1)<(r-1)(n-e)+s(r-k)(n-e-m+1)$ in our setting, there must exist a nonzero $Q \in \mathcal{L}$ that meets the interpolation conditions $Q(\Gg^{j},\Ga_i,y_{i,j},y_{i,j+1},y_{i,j+2},\cdots,y_{i,j+s-1}) = 0$ for $i=1,2,\dots,n$ and $j=0,1,\dots,r-2$.
Finding such a polynomial $Q$ amounts to solving a homogeneous linear system over $\F_{q^n}$ with $n(r-1)$ constraints and ${\dim}_{\F_{q^n}}(\mathcal{L})=(r-1)(n-e)+s(r-k)(n-e-m+1)$ unknowns, which can be done in $O(n^4)$ time.
\end{proof}

\begin{lemma}\label{lem:3.3} Let $f\in \mP_q(n, k, m)[x,y]$  be a  polynomial. Suppose that the codeword $M_f $ is transmitted and $Y=(y_{i,j})_{n\times (r-1)}$ ($y_{i,j}\in \F_{q^{n}}$) is received with at most $e$ errors. Assume that $e<\frac{s(r-k)(n-m+1)}{r-1+s(r-k)}$ and let $Q(x, y,z_1,z_2,\dots,z_s)$ be the interpolation polynomial given in Lemma \ref{lem:3.2}. Then
\begin{equation}\label{eq:4.1}Q(\Gg^{j}, y, f(\Gg^{j},y), f(\Gg^{j+1},y), f(\Gg^{j+2},y), \cdots, f(\Gg^{j+s-1},y)) \equiv 0\end{equation}
for all $j=0,1,2,\dots, r-2$. The above $\equiv$ means that the polynomial on the left is identical to $0$.
\end{lemma}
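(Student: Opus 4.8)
The plan is to show that the univariate polynomial (in $y$)
\[
R_j(y) \;:=\; Q\bigl(\Gg^{j},\,y,\,f(\Gg^{j},y),\,f(\Gg^{j+1},y),\dots,f(\Gg^{j+s-1},y)\bigr)
\]
vanishes identically for each fixed $j\in\{0,1,\dots,r-2\}$ by a degree-versus-number-of-roots argument, exactly as in the classical folded Reed--Solomon setting. First I would bound the $y$-degree of $R_j(y)$. Writing $Q = A_0(x,y) + \sum_{i=1}^{s} A_i(x,z_i)$, the term $A_0(\Gg^j,y)$ contributes a polynomial in $y$ of the shape $\sum_{i=0}^{n-e-1} a_i y^{q^i}$, so of $y$-degree at most $q^{\,n-e-1}$; and each term $A_i(\Gg^j, f(\Gg^{j+i-1},y))$ is obtained by substituting into $A_i(x,z_i)=\sum_{i'=0}^{n-e-m} b_{i'}(x) z_i^{q^{i'}}$ the linearized polynomial $f(\Gg^{j+i-1},y) = \sum_{u=0}^{m-1} f_u(\Gg^{j+i-1}) y^{q^u}$ (here I use that $f$ is $q$-linearized in $y$), which has $y$-degree at most $q^{m-1}$; composing, the $y$-degree of $A_i(\Gg^j,f(\Gg^{j+i-1},y))$ is at most $q^{\,n-e-m}\cdot q^{\,m-1} = q^{\,n-e-1}$. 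Hence $\deg_y R_j(y) \le q^{\,n-e-1}$.

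\smallskip
Next I would exhibit $q^{\,n-e}$ roots of $R_j(y)$, which (being strictly larger than the degree bound $q^{\,n-e-1}$ once $R_j$ is not already forced to be linearized — in fact one gets the sharper count from the linearized structure) forces $R_j \equiv 0$. By Lemma~\ref{lem:3.1} applied to $X = M_f$ and $Z = Y$, the intersection $W := \langle M_f\rangle \cap \langle Y\rangle$ of row spaces over $\F_q$ has $\dim_{\F_q} W \ge n - e$. Translating rows back to field elements via the basis $\{\Ga_1,\dots,\Ga_n\}$, there is an $\F_q$-subspace $\mathcal V\subseteq \F_{q^n}$ with $\dim_{\F_q}\mathcal V \ge n-e$, so $|\mathcal V|\ge q^{\,n-e}$, such that every $\Gb\in\mathcal V$ is simultaneously an $\F_q$-linear combination of the rows of $M_f$ and of the rows of $Y$. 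For such a $\Gb$, say $\Gb = \sum_i \lambda_i \Ga_i$ with $(\lambda_i)\in\F_q^n$: the corresponding combination of rows of $M_f$ has $j$-th entry $\sum_i \lambda_i f(\Gg^j,\Ga_i) = f(\Gg^j, \sum_i\lambda_i\Ga_i) = f(\Gg^j,\Gb)$, using that $f(\Gg^j,\cdot)$ is $q$-linearized and the $\lambda_i\in\F_q$; and the same combination of rows of $Y$ equals $\sum_i\lambda_i y_{i,j}$. Since these two row combinations are \emph{equal} in $\F_{q^n}$, we get $f(\Gg^{j+l},\Gb) = \sum_i \lambda_i y_{i,j+l}$ for every shift $l=0,1,\dots,s-1$ (indices of $y$ taken mod $r-1$ as in Lemma~\ref{lem:3.2}). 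Plugging into $Q$ and using $\F_q$-linearity of each $A_i$ in its second slot together with the interpolation conditions $Q(\Gg^j,\Ga_i,y_{i,j},\dots,y_{i,j+s-1})=0$, we obtain
\[
R_j(\Gb) \;=\; \sum_{i=1}^{n} \lambda_i\, Q\bigl(\Gg^j,\Ga_i,y_{i,j},\dots,y_{i,j+s-1}\bigr) \;=\; 0 .
\]
(One should double-check that $A_0(\Gg^j,y)$ also behaves additively here; since $A_0$ is $q$-linearized in $y$ and the $\lambda_i$ lie in $\F_q$, $A_0(\Gg^j,\Gb)=\sum_i\lambda_i A_0(\Gg^j,\Ga_i)$, so the whole expression splits as claimed.) Thus every element of $\mathcal V$ is a root of $R_j(y)$, giving at least $q^{\,n-e}$ roots.

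\smallskip
Finally, $R_j(y)$ is itself a $q$-linearized polynomial in $y$ (it is an $\F_{q^n}$-linear combination of compositions of $q$-linearized polynomials and of the $q$-linearized $A_0(\Gg^j,y)$), and its set of roots contains the $\F_q$-subspace $\mathcal V$ of dimension $\ge n-e$; but a nonzero $q$-linearized polynomial of $y$-degree $\le q^{\,n-e-1}$ has at most $q^{\,n-e-1}$ roots, i.e.\ its root space has $\F_q$-dimension $\le n-e-1 < \dim_{\F_q}\mathcal V$. This contradiction forces $R_j(y)\equiv 0$, which is precisely \eqref{eq:4.1}. The step I expect to be the main obstacle is the careful bookkeeping in the second paragraph: verifying that passing from ``equal row combinations of $M_f$ and $Y$'' to ``$f(\Gg^{j+l},\Gb) = \sum_i\lambda_i y_{i,j+l}$ for all $s$ shifts'' is legitimate, in particular handling the cyclic wrap-around $y_{i,j+s-1\bmod(r-1)}$ consistently with how the interpolation constraints in Lemma~\ref{lem:3.2} were imposed, and making sure the $q$-linearity of $f$ in the $y$-variable is used correctly so that the substitution $\Gb\mapsto f(\Gg^{j+l},\Gb)$ commutes with $\F_q$-linear combinations.
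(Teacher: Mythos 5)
Your overall strategy (degree bound on $R_j(y)$ versus the size of a subspace of roots obtained from the rank-intersection lemma) is exactly the paper's, and your degree bound $\deg_y R_j \le q^{\,n-e-1}$ is the correct one. However, there is a genuine gap in the way you manufacture the subspace $\mathcal V$ of roots.

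You invoke Lemma~\ref{lem:3.1} with $X=M_f$ and $Z=Y$ to conclude $\dim_{\F_q}\bigl(\langle M_f\rangle\cap\langle Y\rangle\bigr)\ge n-e$. But Lemma~\ref{lem:3.1} only gives $\dim_{\F_q}\bigl(\langle M_f\rangle\cap\langle Y\rangle\bigr)\ge\dim_{\F_q}\langle M_f\rangle-e=\operatorname{rank}(M_f)-e$, and $\operatorname{rank}(M_f)$ can be as small as $n-m+1$ (indeed $0$ if $f=0$), so the bound $n-e$ is not justified. More importantly, even if you have a vector $w\in\langle M_f\rangle\cap\langle Y\rangle$, writing it as $w=\sum_i\lambda_i(\text{row }i\text{ of }M_f)$ and as $w=\sum_i\mu_i(\text{row }i\text{ of }Y)$ gives, in general, \emph{different} coefficient vectors $\lambda\ne\mu$; your argument needs the \emph{same} $(\lambda_i)$ on both sides so that $\sum_i\lambda_i y_{i,j+l}=f(\Gg^{j+l},\Gb)$ with $\Gb=\sum_i\lambda_i\Ga_i$. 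Nothing in Lemma~\ref{lem:3.1} applied to the pair $(M_f,Y)$ forces that. The paper closes exactly this gap by augmenting both matrices with the full-rank $n\times n$ block $P=(\Ga_1,\dots,\Ga_n)^T$ (over $\F_q$): it applies Lemma~\ref{lem:3.1} to $A=(P\mid M_f)$ and $B=(P\mid Y)$, for which $\operatorname{rank}(A)=n$ unconditionally and $\operatorname{rank}(A-B)=\operatorname{rank}(M_f-Y)\le e$. Since the first block of $A$ and $B$ is the same invertible $P$, any common row vector $\lambda A=\mu B$ forces $\lambda P=\mu P$, hence $\lambda=\mu$, hence $\lambda M_f=\lambda Y$; and one gets a subspace of such $\lambda$'s of dimension $\ge n-e$. (Equivalently, and even more directly, one can bypass Lemma~\ref{lem:3.1} and simply note that $\{\lambda\in\F_q^n:\lambda(M_f-Y)=0\}$ has dimension $\ge n-e$ by rank--nullity.) The rest of your argument — the $q$-linearity bookkeeping, the wrap-around indices, and the conclusion that a nonzero $q$-linearized polynomial of degree $\le q^{\,n-e-1}$ cannot have an $(n-e)$-dimensional $\F_q$-space of roots — is sound.

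One side remark: the paper's own write-up states the degree bound as $q^{m-1}$ and takes the root subspace of dimension "at least $m$," but the composed polynomial genuinely can have degree up to $q^{\,n-e-1}$, so your sharper bound and the matching $(n-e)$-dimensional subspace are the correct pair to use.
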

\begin{proof}
Note that $e<\frac{s(r-k)(n-m+1)}{r-1+s(r-k))}<n-m+1$. Since $e$ and $n-m$ are both integers, we have $e\leq n-m$.
The polynomial $Q(\Gg^{j}, y, f(\Gg^{j},y), f(\Gg^{j+1},y), f(\Gg^{j+2},y), \cdots, f(\Gg^{j+s-1},y)) $ has degree at most $q^{m-1}$, moreover it is $q$-linearized. Denote by $A$ and $B$ the $n\times rn$ matrices $((\Ga_1,\Ga_2,\dots,\Ga_n)^T, M_f)$ and $((\Ga_1,\Ga_2,\dots,\Ga_n)^T, Y)$ over $\F_q$, respectively.

It is clear that ${\rm rank}(A-B)={\rm rank}(M_f-Y)\le e$ and ${\rm rank}(A)=n$. Thus, by Lemma \ref{lem:3.1} $\dim_{\F_q}(\langle A\rangle\cap \langle B\rangle)\ge n-e\geq m$. This implies that  exists an $\F_q$-subspace $U$ of ${\rm span}\{\Ga_1,\Ga_2,\dots,\Ga_n\}$ of dimension at least $m$  such that, for every $\Ga=\sum_{i=1}^nc_i\Ga_i\in U$ with $c_i\in\F_q$, one has
 \[\sum_{i=1}^nc_iy_{i,j+u-1}=\sum_{i=1}^nc_if(\Gg^{j+u-1},\Ga_i)=f\left(\Gg^{j+u-1},\sum_{i=1}^nc_i\Ga_i\right)=f(\Gg^{j+u-1},\Ga)\] for $u=1,2,\dots,s$.
Hence,
\begin{eqnarray*}0&=&\sum_{i=1}^nc_iQ(\Gg^{j},\Ga_i,y_{i,j},y_{i,j+1},\cdots,y_{i,j+s-1}) \\
&=&\sum_{i=1}^{n}\left(c_{i}A_{0}(\Gg^{j},\Ga_{i})+\sum_{u=1}^{s}c_{i}A_{u}(\Gg^{j},y_{i,j+u-1})\right)\\
&=&A_{0}\left(\Gg^{j},\sum_{i=1}^{n}c_{i}\Ga_{i}\right)+\sum_{u=1}^{s}A_{u}\left(\Gg^{j},\sum_{i=1}^{n}c_{i}y_{i,j+u-1}\right)\\
&=&A_{0}\left(\Gg^{j},\Ga\right)+\sum_{u=1}^{s}A_{u}\left(\Gg^{j},f(\Gg^{j+u-1},\Ga)\right)\\
&=&Q(\Gg^{j}, \Ga, f(\Gg^{j},\Ga), f(\Gg^{j+1},\Ga), f(\Gg^{j+2},\Ga), \cdots, f(\Gg^{j+s-1},\Ga)).\end{eqnarray*}
As the degree of $Q(\Gg^{j}, y, f(\Gg^{j},y), f(\Gg^{j+1},y), f(\Gg^{j+2},y), \cdots, f(\Gg^{j+s-1},y)) $ is at most $q^{m-1}$. The desired result follows.
\end{proof}

\begin{lemma}\label{lem:3.4} Let $f=\sum_{i=0}^{m-1}f_i(x)y^{q^i}\in \mP_q(n, k, m)[x,y]$  be a  polynomial.   Suppose that the codeword $M_f $ is transmitted and $Y$ is received with at most $e$ errors. Assume that $e<\frac{s(r-k)(n-m+1)}{r-1+s(r-k)}$ and let $Q(x, y,z_1,z_2,\dots,z_s)=
A_0(x,y) + A_1(x,z_1)  + A_2(x,z_2)  + \cdots + A_s(x,z_s)$ be the interpolation polynomial given in Lemma \ref{lem:3.2}. Write $A_0(x,y)=\sum_{i=0}^{n-e-1}A_{0,i}(x)y^{q^i}$ and $A_w(x,z)=\sum_{i=0}^{n-e-m}A_{w,i}(x)z^{q^i}$ for $1\le w\le s$. Then we have
\begin{equation}\label{eq:3}A_{0,u}(x)+\sum_{w=1}^s\sum_{i+v=u}A_{w,i}(x)f_v^{(i)}(\Gg^{w-1}x) \equiv0\end{equation}
for all $0\le u\le n-e-1$, where $g^{(j)}(x)$ stands for $\sum_{i=0}^Ng_i^{q^j}x^i$ for a polynomial $g(x)=\sum_{i=0}^Ng_ix^i\in\F_{q^n}[x]$.
\end{lemma}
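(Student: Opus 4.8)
The plan is to take the identity from Lemma~\ref{lem:3.3}, namely
\[Q(\Gg^{j}, y, f(\Gg^{j},y), f(\Gg^{j+1},y), \ldots, f(\Gg^{j+s-1},y)) \equiv 0,\]
expand it as a $q$-linearized polynomial in $y$, and read off the coefficient of $y^{q^u}$. Everything here is bookkeeping about how composition of $q$-linearized polynomials acts on coefficients, so the first thing I would do is record the relevant algebra: if $A_w(x,z)=\sum_i A_{w,i}(x)z^{q^i}$ and $g(y)$ is $q$-linearized, then $A_w(\Gg^{j}, g(y)) = \sum_i A_{w,i}(\Gg^j)\, g(y)^{q^i}$, and raising a $q$-linearized polynomial $g(y)=\sum_v g_v y^{q^v}$ to the $q^i$-th power gives $\sum_v g_v^{q^i} y^{q^{v+i}}$. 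This is exactly where the twisted-coefficient notation $g^{(i)}(x)=\sum g_\ell^{q^i} x^\ell$ comes in: here the coefficients $f_v(\Gg^{j+w-1})$ play the role of $g_v$, and the $q^i$-th power turns $f_v(\Gg^{j+w-1})$ into $f_v^{(i)}(\Gg^{j+w-1})$, using that $\Gg\in\F_r\subseteq\F_q$ so $(\Gg^{j+w-1})^{q^i}=\Gg^{j+w-1}$.

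Next I would substitute $j=1$, i.e. evaluate everything at $x=\Gg$ (so that the index shift $j+w-1$ becomes $w-1$... actually one should be slightly careful: with $j$ generic the $w$-th term is $A_w(\Gg^j, f(\Gg^{j+w-1},y))$, and the cleanest bookkeeping is to track it for general $j$ and then note the statement is written with the substitution that makes the argument of $f$ equal to $\Gg^{w-1}x$ when $x=\Gg^j$, which is consistent since $f(\Gg^{j+w-1},y)=f(\Gg^{w-1}\cdot\Gg^j,y)$). Writing $f(\Gg^{w-1}x,y)=\sum_v f_v(\Gg^{w-1}x)\,y^{q^v}$ and plugging into $A_w(x, f(\Gg^{w-1}x,y)) = \sum_i A_{w,i}(x)\big(\sum_v f_v(\Gg^{w-1}x) y^{q^v}\big)^{q^i} = \sum_i \sum_v A_{w,i}(x) f_v^{(i)}(\Gg^{w-1}x)\, y^{q^{i+v}}$, and similarly $A_0(x,y)=\sum_u A_{0,u}(x) y^{q^u}$, the whole expression becomes $\sum_u\Big(A_{0,u}(x)+\sum_{w=1}^s\sum_{i+v=u}A_{w,i}(x) f_v^{(i)}(\Gg^{w-1}x)\Big) y^{q^u}$.

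Then the key point is that this $q$-linearized polynomial in $y$ is identically zero, and its $q$-degree is controlled: $A_{0}$ contributes terms $y^{q^u}$ with $u\le n-e-1$, and each $A_w$ term contributes $y^{q^{i+v}}$ with $i\le n-e-m$ and $v\le m-1$, hence $i+v\le n-e-1$ as well, so only powers $y^{q^u}$ with $0\le u\le n-e-1$ appear. A $q$-linearized polynomial vanishing identically has all its coefficients zero, so each bracketed coefficient of $y^{q^u}$ must vanish, which is precisely \eqref{eq:3}. I would also remark, as in Lemma~\ref{lem:3.3}, that $e\le n-m$ so that the degree bookkeeping and the ranges of the indices $u, i, v$ are nonnegative and consistent.

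I do not expect a genuine obstacle here; the proof is essentially a coefficient comparison. The one place to be careful — and the only thing worth stating explicitly — is the interaction between Frobenius and the coefficients: one must use that the evaluation points are powers of $\Gg$, which lie in the prime-order subfield $\F_r$ (indeed in $\F_q$), so that $(\Gg^{w-1})^{q^i}=\Gg^{w-1}$ and the $q^i$-th power of $f_v(\Gg^{w-1}x)$ genuinely equals $f_v^{(i)}(\Gg^{w-1}x)$ with the same argument, rather than something twisted in the argument as well. With that observed, matching the coefficient of $y^{q^u}$ on both sides of the identity from Lemma~\ref{lem:3.3} finishes the proof.
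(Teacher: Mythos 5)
Your expansion bookkeeping is correct, and your observation about Frobenius and the evaluation points (that $\Gg^{w-1}\in\F_r\subseteq\F_q$, so $(\Gg^{w-1})^{q^i}=\Gg^{w-1}$ and hence the $q^i$-th power of $f_v(\Gg^{w-1}x)$ is genuinely $f_v^{(i)}(\Gg^{w-1}x)$) is the right thing to flag. But there is a genuine gap in the last step. Lemma~\ref{lem:3.3} only gives you the vanishing of $Q(\Gg^j,y,\ldots)$ as a polynomial in $y$ for the finitely many specializations $x=\Gg^j$, $j=0,1,\dots,r-2$. When you compare coefficients of $y^{q^u}$, what you actually obtain is
\[A_{0,u}(\Gg^{j})+\sum_{w=1}^s\sum_{i+v=u}A_{w,i}(\Gg^{j})\,f_v^{(i)}(\Gg^{w+j-1})=0\quad\text{for each }j=0,\dots,r-2,\]
which is a statement about evaluations, not an identity of polynomials in $x$. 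You then assert the bracketed coefficient vanishes ``identically,'' but nothing you wrote justifies passing from vanishing at the $r-1$ points $1,\Gg,\dots,\Gg^{r-2}$ to vanishing as a polynomial in $x$. This is where the paper inserts the degree-count argument that your write-up is missing: the polynomial $A_{0,u}(x)+\sum_{w=1}^s\sum_{i+v=u}A_{w,i}(x)f_v^{(i)}(\Gg^{w-1}x)$ has degree at most $r-2$ in $x$ (since $\deg A_{0,u}\le r-2$ and $\deg\bigl(A_{w,i}\cdot f_v^{(i)}(\Gg^{w-1}\,\cdot\,)\bigr)\le (r-k-1)+(k-1)=r-2$), yet it has at least $r-1$ roots, hence it is the zero polynomial. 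Without that observation the conclusion \eqref{eq:3}, which is an identity in $\F_{q^n}[x]$, does not follow; you should add it to close the argument.
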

\begin{proof} By Lemma \ref{lem:3.3}, we have
\begin{eqnarray*}
0&\equiv&Q(\Gg^{j}, y, f(\Gg^{j},y), f(\Gg^{j+1},y), f(\Gg^{j+2},y), \cdots, f(\Gg^{j+s-1},y)) \\
&=& \sum_{u=0}^{n-e-1}A_{0,u}(\Gg^{j})y^{q^u}+\sum_{w=1}^s\sum_{i=0}^{n-e-m}A_{w,i}(\Gg^{j})\left(\sum_{v=0}^{m-1}f_v(\Gg^{w+j-1})y^{q^v}\right)^{q^i}\\
&=& \sum_{u=0}^{n-e-1}A_{0,u}(\Gg^{j})y^{q^u}+\sum_{u=0}^{n-e-1}\left(\sum_{w=1}^s\sum_{i+v=u}A_{w,i}(\Gg^{j})f_v^{(i)}(\Gg^{w+j-1})\right)y^{q^u}\\
\end{eqnarray*}
This gives
\begin{equation*}A_{0,u}(\Gg^{j})+\sum_{w=1}^s\sum_{i+v=u}A_{w,i}(\Gg^{j})f_v^{(i)}(\Gg^{w+j-1})=0
\end{equation*}
for all $0\le u\le n-e-1$ and $0\le j\le r-2$. This implies that the polynomial \[A_{0,u}(x)+\sum_{w=1}^s\sum_{i+v=u}A_{w,i}(x)f_v^{(i)}(\Gg^{w-1}x)\] has at least $r-1$ roots. On the other hand, this polynomial has degree at most $k-1\le r-2$. The desired result follows.
\end{proof}

\subsection{Analysis of list and list size}
Before discussing the list, let us introduce periodic subspaces that were defined in \cite{Guru2012}. For a vector $\mv{a}=(a_1,a_2,\dots,a_N) \in \F_r^N$ and positive
integers $t_1 \le t_2 \le m$, we  denote by
$\proj_{[t_1,t_2]}(\mv{a}) \in \F_q^{t_2-t_1+1}$ its projection onto
coordinates $t_1$ through $t_2$, i.e.,
$\proj_{[t_1,t_2]}(\mv{a})=(a_{t_1},a_{t_1+1},\dots,a_{t_2})$. When
$t_1=1$, we use $\proj_t(\mv{a})$ to denote
$\proj_{[1,t]}(\mv{a})$. These notions are extended to subsets of
strings in the obvious way: $\proj_{[t_1,t_2]}(S) = \{
\proj_{[t_1,t_2]}(\mv{x}) :\; \mv{x} \in S\}$.

\begin{defn}[Periodic subspaces]
\label{def:periodic-subspaces}
For positive integers $u,b,\period$ and $\dims := b\period$, an affine subspace $H \subset \F_r^\dims$ is said to be $(u,\period,b)_r$-periodic if there exists a subspace $W \subseteq \F_r^\period$ of dimension at most $u$ such that
 for every $j=1,2,\dots,b$, and every ``prefix" $\mv{a} \in \F_q^{(j-1)\period}$,  the projected
 affine subspace of $\F_r^\period$ defined as
\[ \{ \proj_{[(j-1)\period+1,j\period]}(\mv{x}):\;\mv{x} \in H \mbox{ and } \proj_{(j-1)\period}(\mv{x}) =\mv{a} \} \  \]
is contained in an affine subspace of $\F_r^\period$ given by $W + \mv{v}_{\mv{a}}$ for some vector $\mv{v}_{\mv{a}} \in \F^\period$ dependent on $\mv{a}$.
\end{defn}

Now we return to finding list of polynomial candidates.
\begin{lemma}\label{lem:3.5} Let $f=\sum_{i=0}^{m-1}f_i(x)y^{q^i}\in \mP_q(n, k, m)[x,y]$  be a  polynomial.   Suppose that the codeword $M_f $ is transmitted and $Y$ is received with at most $e$ errors. Assume that $e<\frac{s(r-k)(n-m+1)}{r-1+s(r-k)}$. Then solutions of \eqref{eq:3} form  an $(s-1,\ell n(r-1),m)_r$-periodic subspace of size at most $r^{m(s-1)}$.
\end{lemma}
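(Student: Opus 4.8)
The plan is to solve the system \eqref{eq:3} step by step in $u$, tracking how much freedom remains at each stage, and to show this freedom is exactly of the periodic type. First I would set up the right ambient vector space: a polynomial $f\in\mP_q(n,k,m)[x,y]$ is determined by the $km$ coefficients of $f_0,\dots,f_{m-1}$ over $\F_{q^n}$, and since $\F_{q^n}$ has dimension $\ell n$ over $\F_r$, the full solution set sits inside $\F_r^{\dims}$ with $\dims = \ell n\cdot k\cdot m$; but for the periodic structure the natural block size is $\period = \ell n(r-1)\ge \ell n k$ and we have $b=m$ blocks, the $v$-th block carrying the $\F_r$-coordinates of $f_v$ (padded out to size $\ell n (r-1)$). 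I would organize the proof around the recursion hidden in \eqref{eq:3}: for each $u$ with $0\le u\le n-e-1$, isolate the $i=0$, $v=u$ term, which is $A_{0,u}(x)$ plus a linear combination involving $f_u^{(0)}(\Gg^{w-1}x)$ (i.e. $f_u$ itself, only Frobenius-twisted in the argument, not in the coefficients), against the ``already-determined'' data $f_0,\dots,f_{u-1}$.

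The key step is the following observation about \eqref{eq:3}. Reading the equation for index $u$, every occurrence of the unknowns $f_u,\dots,f_{m-1}$ beyond $f_u$ is forced to appear with $i\ge 1$, hence only $f_u$ (with $i=0$) enters ``fresh''; the terms with $v<u$ involve only previously processed coefficients, and the terms with $v=u$, $i\ge1$ do not occur because $i+v=u$ forces $i=0$ when $v=u$. So for $u=0$ the equation reads $A_{0,0}(x)+\sum_{w=1}^s A_{w,0}(x)f_0(\Gg^{w-1}x)\equiv 0$, and once we (arbitrarily) fix $A_{w,0}$, this is a \emph{single} $\F_{q^n}$-linear constraint that pins down $f_0$ up to the solutions of the homogeneous version $\sum_{w=1}^s A_{w,0}(x)f_0(\Gg^{w-1}x)\equiv0$. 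One then argues, as in the folded Reed--Solomon / Guruswami--Wang style analysis, that the map $g(x)\mapsto \sum_{w=1}^s A_{w,0}(x)g(\Gg^{w-1}x)$ on polynomials of degree $\le k-1$ over $\F_{q^n}$ — viewed as an $\F_r$-linear map after fixing a basis, using that the $A_{w,i}$ are themselves $q$-linearized in the variable $z$ so the Frobenius twists are $\F_r$-linear — has kernel of $\F_r$-dimension at most $s-1$. This uses that not all $A_{w,0}$ vanish for the relevant range of $w$ (which follows from $Q\neq0$ together with degree bounds, traced back through Lemmas~\ref{lem:3.2}--\ref{lem:3.4}), so the kernel is a proper ``difference-operator'' kernel whose dimension is bounded by $s-1$. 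Inductively, for general $u$ the inhomogeneous term is determined by $f_0,\dots,f_{u-1}$ (and the fixed $A$'s), so $f_u$ lies in a coset of the \emph{same} kernel space $W$ of dimension $\le s-1$ — independently of $u$ — which is precisely the definition of an $(s-1,\ell n(r-1),m)_r$-periodic subspace with $W\subseteq\F_r^{\ell n(r-1)}$. The size bound $r^{m(s-1)}$ is then immediate: there are $m$ blocks, each contributing a factor at most $r^{s-1}$.

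The main obstacle I anticipate is the uniform dimension bound on the kernel of the ``twisted difference operator'' $g\mapsto\sum_{w=1}^s A_{w,0}(x)g(\Gg^{w-1}x)$, and showing it is the \emph{same} space at every level $u$. For this I would argue that \eqref{eq:3} for a given $u$, after subtracting off the contribution of $f_0,\dots,f_{u-1}$ which is already pinned down, becomes $\tilde A_{0,u}(x)+\sum_{w=1}^s A_{w,0}(x)f_u(\Gg^{w-1}x)\equiv 0$ where the leading coefficients $A_{w,0}(x)$ (the $i=0$ part) are genuinely independent of $u$; hence the homogeneous solution space for $f_u$ is the fixed $\F_r$-space $W$, and an $(s-1)$-dimensional bound on $W$ comes from the fact that a nonzero ``linearized'' operator $\sum_{w=1}^s a_w(x)\,g(\Gg^{w-1}x)$ of order at most $s-1$ in the appropriate sense can have at most $r^{s-1}$ roots in the space of polynomials of bounded degree — essentially a Wronskian / companion-matrix argument over the field $\F_r$, exploiting $\Gg\in\F_r^*$ and $x^r\equiv\Gg x\pmod{x^{r-1}-\Gg}$. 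I would also need to double-check the edge case where some $A_{w,0}$ vanish (shrinking the effective order), which only \emph{helps} the bound. Once these pieces are in place, matching the data to Definition~\ref{def:periodic-subspaces} — identifying the period $\period=\ell n(r-1)$, the block count $b=m$, and the ambient dimension $\dims=b\period$ — is a bookkeeping exercise.
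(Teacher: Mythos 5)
Your recursive plan matches the paper's: solve \eqref{eq:3} block by block in $u$, isolate the fresh unknown $f_u$ (the $i=0$, $v=u$ term), and observe that the homogeneous operator governing the freedom in $f_u$ --- namely $g\mapsto\sum_{w=1}^s A_{w,0}(x)\,g(\Gg^{w-1}x)$ --- is the same for every $u$, so each block lies in a coset of a fixed $W$, which is exactly the periodicity. The final bookkeeping ($m$ blocks, each contributing at most $r^{s-1}$) is also right.

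The genuine gap is in the one step that actually carries the lemma: why is $W$ an $\F_r$-space of dimension at most $s-1$? You defer this to ``a Wronskian / companion-matrix argument over the field $\F_r$,'' but that does not yet explain why the relevant linearity should be over $\F_r$ rather than over $\F_{q^n}$. The operator $g\mapsto\sum_{w} A_{w,0}(x)\,g(\Gg^{w-1}x)$ is plainly $\F_{q^n}$-linear on the $k$-dimensional $\F_{q^n}$-space of polynomials of degree $<k$, and a bound of order $s-1$ on its $\F_{q^n}$-kernel would only give $\F_r$-dimension up to $\ell n(s-1)$, which grows with $n$ and is useless for the claimed list size $r^{m(s-1)}$. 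The paper's specific move, which your plan gestures at via ``$x^r\equiv\Gg x$'' but does not carry out, is to pass to the residue field $K=\F_{q^n}[x]/(x^{r-1}-\Gg)\simeq\F_{q^{n(r-1)}}$ and to assert that the substitution $x\mapsto\Gg^{w-1}x$ becomes iterated $r$-Frobenius there, i.e., $f_0(\Gg^{w-1}x)\equiv f_0(x)^{r^{w-1}}\pmod{x^{r-1}-\Gg}$, so that the homogeneous constraint is the vanishing of a nonzero $\F_r$-linearized polynomial $\sum_{w=1}^s A_{w,0}\cdot z^{r^{w-1}}$ of $r$-degree at most $s-1$ in the single unknown $z=f_0(x)\in K$. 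Its root set is automatically an $\F_r$-subspace of dimension at most $s-1$; no Wronskian is needed. To make your write-up rigorous you would need to state and verify exactly this Frobenius identification --- note that a direct computation in $K$ gives $f_0(x)^r=\sum_j a_j^r\Gg^j x^j$, which matches $f_0(\Gg x)=\sum_j a_j\Gg^j x^j$ only after accounting for the coefficient twist $a_j\mapsto a_j^r$, so this is precisely the delicate point to resolve rather than wave away.
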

\begin{proof}
Note that for $u\in[0,n-e-1]$, the solutions of
\eqref{eq:3} give the list of the candidates.

Let us start with $u=0$. Then \eqref{eq:3} gives the equation
\begin{equation}\label{eq:4}
A_{0,0}(x)+\sum_{w=1}^{s}A_{w,0}(x)f_{0}^{(0)}(\gamma^{w-1}x)=0
\end{equation}
Note that  $f_{0}^{(0)}(x)=f_0(x)$. In the residue ring $\F_{q^n}[x]/(x^{r-1}-\Gg)$, the equation \eqref{eq:4} becomes
\begin{equation}\label{eq:5}
A_{0,0}(x)+\sum_{w=1}^{s}A_{w,0}(x)(f_{0}(x))^{r^{w-1}}\equiv0\quad \bmod{x^{r-1}-\gamma}.
\end{equation}
Since $x^{r-1}-\gamma$ is an irreducible polynomial over $\F_{q^n}$, the residue ring $\F_{q^n}[x]/(x^{r-1}-\Gg)\simeq\F_{q^{n(r-1)}}$ is a field. Because the degree of $f_{0}(x)$ is at most $r-2$, all solutions of $f_{0}(x)$ in the equation \eqref{eq:5} form an affine space $W+\bv_1$ for some $\bv_1\in\F_{q^n}[x]/(x^{r-1}-\Gg)\simeq\F_r^{\ell n(r-1)}$, where $W$ is the solution space  of the $\F_r$-linearized polynomial
\begin{equation}\label{eq:6}
\sum_{w=1}^{s}A_{w,0}(x)z^{r^{w-1}}\equiv 0\quad \bmod{x^{r-1}-\gamma}\end{equation} and therefore it has dimension at most $s-1$ over $\F_r$.

Note that once $f_{0}(x)$ is recovered, all $f_0^{(j)}$ are recovered as well for $j\ge 0$.

By induction, assume that all $f_{i}(x)$ have been recovered for $0\le i\le a-1$. Next, we want to recover $f_{a}(x)$ from the following equation
\[
 A_{0,a}(x)+\sum_{w=1}^{s}\sum_{i+v=a}A_{w,i}(x)(f_{v}^{(i)}(x))^{r^{w-1}}\equiv0\quad \bmod{x^{r-1}-\gamma}.
\]
Rewrite the above equation into the following
\begin{equation}\label{eq:7}
 A_{0,a}(x)+\sum_{w=1}^s\sum_{v=1}^{a-1}A_{w,a-v}(x)(f_{v}^{(a-v)}(x))^{r^{w-1}}   +\sum_{w=1}^{s}A_{w,0}(x)(f_{a}^{(0)}(x))^{r^{w-1}}\equiv0\quad \bmod{x^{r-1}-\gamma}
\end{equation}
By the similar arguments, one can show that all solutions of $f_{a}^{(0)}(x)=f_a(x)$ in the equation \eqref{eq:7} form an affine space $W+\bv_a$ for some $\bv_a\in\F_{q^n}[x]/(x^{r-1}-\Gg)\simeq\F_r^{\ell n(r-1)}$.
Apparently, all possible  $(f_0(x),f_1(x),\dots,f_{m-1}(x))$ in the list form an $(s-1,\ell n(r-1),m)_r$-periodic subspace.

To compute the list size, we note that each $f_i(x)$ has at most $r^{s-1}$ solutions. Thus, the list size is bounded by $r^{m(s-1)}$.
\end{proof}
As $m$ is promotional to $n$, the list size $r^{m(s-1)}$ in Lemma \ref{lem:3.5} becomes exponential. We will prune down the list size by pre-encoding through the special structure of periodic subspace.
\begin{rmk}\label{rmk:2} Each $f_i(x)$ is a solution of \eqref{eq:7}. As $\deg(f_i(x))\le k-1$, there exist an $g(x)\in\F_{q^n}[x]$ with $\deg(g(x))\le k-1$ such that $f(x)\in g(x)+W'$, where $W'=W\cap\{h(x)\in\F_{q^n}[x]:\;\deg(h)\le k-1\}$ and  $W$ is the solution space of \eqref{eq:6}. This implies that our message $f(x)$ actually belongs to an $(s-1,\ell nk,m)_r$-periodic subspace of size at most $r^{m(s-1)}$.
\end{rmk}

\subsection{Decoding radius}
Finally, let us compute the decoding radius from the list decoding in this section.

Put $e=\left\lfloor\frac{s(r-k)(n-m+1)}{r-1+s(r-k))}\right\rfloor-1$ and $\tau=e/ n$, then we have
\begin{equation}\label{eq:8}\tau\thickapprox \frac{s(r-k)}{r-1+s(r-k)}\left(1-\frac mn\right)=\frac{s(r-k)}{r-1+s(r-k)}\left(1- R\times\frac{r-1}k\right).\end{equation}

If we take $s=r-1$ and $k=r-c$ for some $1\le c\le r-1$, then we get
 \begin{equation}\label{eq:11}\tau=\frac{c}{c+1}\left(1-\frac{r-1}{r-c}\times R\right).\end{equation}

\section{Pruning list size}
In this section, we prune list via subspace design and h.s.e. The subspace design provides a deterministic algorithm with a constant list size, while h.s.e  provides a randomized algorithm with a smaller constant list size.

\subsection{A deterministic algorithm}
The subspace design was first introduced in \cite{Guru2013} to pin down list.

\begin{defn} A collection $S$ of $\F_r$-subspaces $H_1,\dotsc, H_M\subseteq\F_r^{\GL}$ is called a $(v,A,\GL)_r$-{subspace design} if for every $\F_{r}$-linear space $W\subset \F_r^{\GL}$ of dimension $v$,
\[\sum_{i=1}^M \dim_{\F_r} (H_i\cap W)\leq A.\]
\end{defn}

In order to pin down the list to a constant size, one has to consider intersection with subspace evasive set introduced in \cite{GW}.

\begin{defn} A subset $S$ of $\F_r^{\GL}$ is called a $(v,A,\GL)_r$-subspace evasive if for any subspace $W$ of $\F_r^{\GL}$ of dimension $v$, the intersection $S\cap W$ has size at most $A$.
\end{defn}

The following result tells that one can obtain a  small list from intersection of a periodic subspace with a suitable subspace design .

\begin{lemma}(\cite{Guru2013,Guru2015})\label{lem:4.1}
Let $H$ be a $(v, \Lambda, b)_r$-periodic subspace, and let $\{H_1,H_2,\dots,H_b\}$  be a $(v, A,\GL)_r$-subspace design. Then $H\cap (H_1\times\dotsb\times H_b)$ is an affine subspace over $\F_r$ of dimension at most $A$.
\end{lemma}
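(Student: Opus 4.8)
The statement concerns the intersection of a $(v,\Lambda,b)_r$-periodic subspace $H\subseteq\F_r^{b\Lambda}$ with a product $H_1\times\cdots\times H_b$ of subspaces of $\F_r^\Lambda$ coming from a $(v,A,\Lambda)_r$-subspace design. The plan is to bound the dimension of $H\cap(H_1\times\cdots\times H_b)$ block by block, processing the $b$ coordinate blocks of length $\Lambda$ in order, and charging each block's ``extra'' dimension against the subspace-design guarantee applied to the fixed $v$-dimensional space $W$ attached to $H$ by Definition~\ref{def:periodic-subspaces}.

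\textbf{Key steps.} First I would observe that $H\cap(H_1\times\cdots\times H_b)$ is visibly an affine subspace, so only the dimension bound needs work. Next, fix the subspace $W\subseteq\F_r^\Lambda$ of dimension at most $v$ witnessing that $H$ is $(v,\Lambda,b)_r$-periodic. For a partial solution, suppose we have already fixed a prefix $\mathbf{a}\in\F_r^{(j-1)\Lambda}$ lying in the projection of $H\cap(H_1\times\cdots\times H_b)$ onto the first $j-1$ blocks; by the periodicity definition, the set of legal $j$-th blocks extending $\mathbf{a}$ within $H$ lies in an affine shift $W+\mathbf{v}_{\mathbf{a}}$, and within $H_1\times\cdots\times H_b$ it must also lie in $H_j$. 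Hence the $j$-th block lives in $(W+\mathbf{v}_{\mathbf{a}})\cap H_j$, which is an affine subspace of dimension at most $\dim_{\F_r}(W\cap H_j')$ for the appropriate translate — or, more simply, at most $\dim_{\F_r}(W'\cap H_j)$ where $W'$ is the $\le v$-dimensional \emph{linear} space of differences of elements of $W+\mathbf{v}_{\mathbf a}$, i.e.\ $W$ itself. So each block contributes at most $\dim_{\F_r}(H_j\cap W)$ new degrees of freedom on top of the prefix. Summing the telescoping bound over $j=1,\dots,b$ gives
\[
\dim_{\F_r}\bigl(H\cap(H_1\times\cdots\times H_b)\bigr)\le \sum_{j=1}^b \dim_{\F_r}(H_j\cap W)\le A,
\]
where the last inequality is exactly the defining property of the $(v,A,\Lambda)_r$-subspace design applied to the single $v$-dimensional space $W$.

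\textbf{Main obstacle.} The delicate point is making the ``block by block'' dimension count rigorous: one must argue that the dimension of the whole intersection is at most the sum over $j$ of the dimensions of the fibers of the $j$-th block over a generic (worst-case) prefix, and that each such fiber dimension is bounded by $\dim(H_j\cap W)$ rather than by $\dim(H_j\cap(W+\mathbf v_{\mathbf a}))$ — the affine shift $\mathbf v_{\mathbf a}$ does not change the \emph{dimension} of the intersection but does depend on the prefix, so one should phrase the induction as: the projection onto the first $j$ blocks has dimension at most $\sum_{i\le j}\dim(H_i\cap W)$, proved by noting that the fiber of this projection over any fixed prefix point is contained in a translate of $H_j\cap W$. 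Care is also needed that $W$ genuinely does not depend on $j$ (it does not, by the definition of periodic subspace), so that a single application of the subspace-design property suffices. This argument is essentially the one in \cite{Guru2013,Guru2015}, and no new ideas beyond bookkeeping of affine fibers should be required.
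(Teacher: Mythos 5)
Your proof is correct and follows the same block-by-block fiber-dimension argument used in \cite{Guru2013,Guru2015}, which the paper itself simply cites without reproducing: fix the periodicity witness $W$, note that the fiber of the $j$-th block over any prefix is an affine set whose direction space lies in $H_j\cap W$, and telescope to get $\sum_j\dim_{\F_r}(H_j\cap W)\le A$ via the subspace-design guarantee. The point you flag as delicate — that $\dim$ of the affine intersection $(W+\mathbf v_{\mathbf a})\cap H_j$ is bounded by $\dim(W\cap H_j)$ because its difference set lies in $W\cap H_j$, and that the fiber dimensions add across the surjective affine projections $\pi_j(S)\to\pi_{j-1}(S)$ — is exactly the bookkeeping the cited proofs carry out, so there is no gap.
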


Assume that $\Lambda$ has a divisor $\lambda\thickapprox 2\log_r \Lambda$ for some $c>1$ and thus we have $r^{\lambda}>\Lambda$. Let $q_{1}=r^{\lambda}$ and $\Lambda'=\Lambda/ \lambda$.

\begin{lemma}[\cite{Dvir2012}]
\label{lem:4.2} Let $\Ge>0$ be a small real. Let $v$ be a positive integer and set $h\thickapprox v/\Ge$ to be a positive integer. Assume that $q_1\ge h$  and let $\gamma_1,\dotsc, \gamma_h$ be distinct nonzero elements of $\F_{q_1}$.  Let $d_1>d_2>\dotsb > d_h\geq 1$ be integers. Define $f_1,\dotsc, f_v\in \F_{q_1}[x_1,\dotsc, x_h]$ as follows:
\begin{equation}
\label{eqn:dl}
f_i(x_1,\dotsc, x_h) = \sum_{j=1}^h \gamma_j^i x_j^{d_j} \ .
\end{equation}
Then:
\begin{itemize}
\item The variety $\mathbf{V}=\{\bx\in \overline{\F}_{q_1}^h\mid f_1(\bx)=\dotsb= f_v(\bx)=0\}$ satisfies $\abs{\mathbf{V}\cap H}\leq (d_1)^v$ for all $v$-dimensional affine subspaces $H\subset \overline{\F}_{q_1}^h$.
\item
If at least $v$ of the degrees $d_i$ are relatively prime to $q_1-1$, then $\abs{\mathbf{V}\cap \F_{q_1}^h}= q_1^{h-v}$.
\item The product set $(\mathbf{V}\cap \F_{q_1}^h)^{\GL'/h}\subseteq \F_{q_1}^{\GL'}$ is $(a, (d_1)^a,\GL')_{q_1}$-subspace evasive for all $a\leq v$.
\end{itemize}
\end{lemma}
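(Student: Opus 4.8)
The statement to prove is Lemma~\ref{lem:4.2}, the explicit construction of subspace evasive sets due to Dvir--Lovett. I will treat the three bullets in turn, since they are logically independent given the algebraic setup.

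\textbf{The first bullet (bound on intersection with affine subspaces).} The plan is to argue that the variety $\mathbf{V}$ cut out by the $v$ polynomials $f_1,\dots,f_v$ has degree at most $d_1$ in the strong sense that its intersection with \emph{any} $v$-dimensional affine subspace $H$ has at most $(d_1)^v$ points (counted with multiplicity, or is all of $H$, which the relatively-prime hypothesis will rule out in the relevant case). The key observation is that the matrix $(\gamma_j^i)_{1\le i\le v,\,1\le j\le h}$ has all maximal minors nonzero, being a (generalized) Vandermonde matrix in the distinct nonzero $\gamma_j$. Hence after restricting to $H$ and performing Gaussian elimination on the system $f_1=\dots=f_v=0$, one can solve for $v$ of the monomials $x_{j_1}^{d_{j_1}},\dots,x_{j_v}^{d_{j_v}}$ in terms of the remaining ones; parametrizing $H$ by $h-v$ free affine coordinates and substituting, one is left with a triangular-type system where the $\nu$-th equation, after clearing the earlier ones, expresses $x_{j_\nu}^{d_{j_\nu}}$ as a function of the free parameters, so it has at most $d_{j_\nu}\le d_1$ solutions for each choice of the previously determined coordinates. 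Multiplying the counts $d_{j_1}\cdots d_{j_v}\le (d_1)^v$ gives the bound. The bookkeeping needed to make ``solve for $v$ monomials, then iterate'' precise — in particular that the elimination does not degenerate on a positive-dimensional locus — is the one place to be careful; it is handled by noting that $H$ being affine, the substituted equations are genuine univariate equations in one new coordinate at each step once the free coordinates are fixed.

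\textbf{The second bullet (counting $\F_{q_1}$-points).} The plan is to fix the $h-v$ ``free'' variables to arbitrary values in $\F_{q_1}$ and count solutions of the resulting system for the remaining $v$ variables. After Gaussian elimination using the Vandermonde structure as above, the system becomes $x_{j_\nu}^{d_{j_\nu}} = c_\nu$ for $\nu=1,\dots,v$, where each $c_\nu\in\F_{q_1}$ depends on the free variables and on the earlier $x_{j_\mu}$. The point is that if $\gcd(d_{j_\nu}, q_1-1)=1$ then $x\mapsto x^{d_{j_\nu}}$ is a bijection of $\F_{q_1}$, so $x_{j_\nu}^{d_{j_\nu}}=c_\nu$ has exactly one solution in $\F_{q_1}$, regardless of $c_\nu$ (including $c_\nu=0$). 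Using the hypothesis that at least $v$ of the $d_i$ are coprime to $q_1-1$ — we may take $j_1,\dots,j_v$ to be exactly these indices when setting up the elimination — we get exactly one solution in $\F_{q_1}$ for the $v$ constrained variables for \emph{every} one of the $q_1^{h-v}$ choices of the free variables. Hence $|\mathbf{V}\cap\F_{q_1}^h| = q_1^{h-v}$.

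\textbf{The third bullet (product set is subspace evasive).} Here the plan is a direct product/projection argument. Write $S_0 = \mathbf{V}\cap\F_{q_1}^h$ and $S = S_0^{\Lambda'/h}\subseteq\F_{q_1}^{\Lambda'}$. Let $H\subseteq\F_{q_1}^{\Lambda'}$ be an affine subspace of dimension $a\le v$. Group the $\Lambda'$ coordinates into $\Lambda'/h$ blocks of size $h$. One processes the blocks one at a time: for a point $\bx\in S\cap H$, the restriction of $H$ to the first block, conditioned on a fixed prefix in the earlier blocks, is an affine subspace of $\F_{q_1}^h$ of dimension at most $a\le v$, and within it the first-block coordinates of $\bx$ must lie in $S_0 = \mathbf{V}\cap\F_{q_1}^h$; by the first bullet this leaves at most $(d_1)^a$ possibilities (if the affine subspace has dimension $a'\le a$, one still gets at most $(d_1)^{a'}\le(d_1)^a$, and here one also uses that $\mathbf{V}$ cannot contain the whole $a'$-dimensional subspace — which follows from the second bullet's argument, or more simply from the degree bound with the coprimality ensuring no component is everything). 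But once the first block is pinned to one of these finitely many values, the dimension of the ``remaining'' affine subspace (the slice of $H$ over that prefix) drops by the number of new linear constraints imposed; crucially, summing the dimension drops over all blocks is at most $a$, so the product of the per-block bounds telescopes to $(d_1)^a$. Iterating over all $\Lambda'/h$ blocks yields $|S\cap H|\le (d_1)^a$, i.e. $S$ is $(a,(d_1)^a,\Lambda')_{q_1}$-subspace evasive for all $a\le v$.

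\textbf{Main obstacle.} I expect the crux to be making the ``dimension budget'' argument in the third bullet fully rigorous: one needs that across the $\Lambda'/h$ blocks, the dimensions of the successive affine slices of $H$ (after conditioning on the already-chosen earlier blocks) sum to at most $a=\dim H$, so that the factors $(d_1)^{a_1},(d_1)^{a_2},\dots$ with $\sum a_i\le a$ multiply to at most $(d_1)^a$. This is the standard ``prefix-conditioning'' lemma for product constructions, and the key point is that each new block can only impose new constraints, never remove old freedom — but stating this cleanly (and handling the affine, rather than linear, case, where one must track a shifting offset) is the part that requires care. Everything else reduces to the Vandermonde invertibility of $(\gamma_j^i)$ and the bijectivity of $x\mapsto x^d$ on $\F_{q_1}^*$ (and on $\F_{q_1}$) when $\gcd(d,q_1-1)=1$.
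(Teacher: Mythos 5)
The paper does not prove Lemma~\ref{lem:4.2}; it is imported verbatim from Dvir and Lovett \cite{Dvir2012} (the bracketed citation after the lemma number is doing all the work), so there is no in-paper proof to compare against. Judging your sketch on its own: the second bullet is fine — every $v\times v$ submatrix of $(\gamma_j^i)_{1\le i\le v,\,1\le j\le h}$ is invertible (generalized Vandermonde with distinct nonzero nodes and consecutive powers), so one may elect to solve for exactly the $v$ coordinates indexed by the coprime degrees, and $x\mapsto x^d$ being a bijection of $\F_{q_1}$ when $\gcd(d,q_1-1)=1$ gives exactly one lift per free assignment. Your third bullet uses the right mechanism (block-by-block prefix conditioning with the dimension increments summing to at most $a$), but note that it invokes not the stated first bullet but the refinement $\abs{\mathbf{V}\cap H'}\le (d_1)^{a'}$ for \emph{every} affine $H'$ of dimension $a'\le v$. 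You assert this in passing; it is true (run the first-bullet argument using only $f_1,\dots,f_{a'}$, since $\mathbf{V}$ sits inside their common zero locus), but it must be stated, because with only the $v$-dimensional bound $(d_1)^v$ per block the product does not telescope to $(d_1)^a$.

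The first bullet is where your sketch has a genuine gap, and it is the crux of the whole lemma rather than the ``bookkeeping'' you flag in bullet three. First, ``parametrizing $H$ by $h-v$ free affine coordinates'' is a dimension slip: $\dim H=v$, so $H$ has $v$ parameters; you have imported the free-variable count from the second bullet, where one parametrizes $\mathbf{V}\cap\F_{q_1}^h$ rather than $\mathbf{V}\cap H$. More substantively, after restricting to $H$ the polynomials become $g_i(\mv{y})=\sum_j\gamma_j^i L_j(\mv{y})^{d_j}$ with each $L_j$ affine in $\mv{y}\in\overline{\F}_{q_1}^v$; the step ``solve for $v$ of the $x_{j_\nu}^{d_{j_\nu}}$'' does not survive this substitution, since the restricted $L_j^{d_j}$ are correlated polynomials in the shared $\mv{y}$ rather than independent monomials, and a single relation of the form $L_{j_\nu}(\mv{y})^{d_{j_\nu}}=(\text{affine combination of the other } L_j^{d_j})$ has a positive-dimensional solution set in $\overline{\F}_{q_1}^v$ unless earlier equations have already eliminated the remaining parameters — which your sketch never establishes. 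The Dvir--Lovett argument obtains properness of the intersection by exploiting that the $d_j$ are \emph{strictly decreasing}: because $H$ is $v$-dimensional at least $v$ of the $L_j$ are nonconstant, so the top-degree forms of $g_1,\dots,g_v$ are all nonzero scalar multiples of a single form $\overline{L}_{j_1}^{\,d_{j_1}}$ (with $j_1$ the least index at which $L_{j_1}$ is nonconstant), enabling iterated leading-form elimination that drives the degrees down through the distinct values $d_{j_1}>d_{j_2}>\cdots$ and yields a zero-dimensional variety, at which point Bezout bounds the count by $d_{j_1}\cdots d_{j_v}\le (d_1)^v$. Your sketch cites Vandermonde invertibility but never engages with this degree-elimination/properness step, which is the missing idea.
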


The below statement follows immediately from Lemma~\ref{lem:4.2} and the fact that when the $d_j$'s are powers of $r$, the polynomials $f_i$ defined in \eqref{eqn:dl} are $\F_r$-linearized polynomials.

\begin{cor}\label{col:1} Let $\Ge>0$ be a small real. Let $v$ be a positive integer and set $h\thickapprox v/\Ge$ to be a positive integer. Assume that $q_1\ge h$.
By setting $d_{1}=r^{h-1},d_{2}=r^{h-2},\ldots,d_{h}=1$ in Lemma \ref{lem:4.2}, one obtains an explicit $(a,r^{a(h-1)},\GL')_{q_1}$-subspace evasive set $S$ of  size $q_{1}^{(1-\epsilon)\Lambda'}$  for all $1\leq a\leq v$. Furthermore, $S$ is an $\F_r$-linear space of dimension $(1-\Ge)\Gl\GL'=(1-\Ge)\GL$ and a basis of $S$ can be computed in time ${\rm poly}(\GL,\log r)$.
\end{cor}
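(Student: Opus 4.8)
The plan is to read off everything from Lemma~\ref{lem:4.2} by specializing the degree sequence to powers of $r$. First I would set $d_j = r^{h-j}$ for $j=1,\dots,h$, which is a strictly decreasing sequence of positive integers with $d_1 = r^{h-1}$, so Lemma~\ref{lem:4.2} applies. With this choice, each polynomial $f_i(x_1,\dots,x_h)=\sum_{j=1}^h \gamma_j^i x_j^{r^{h-j}}$ is an $\F_r$-linearized polynomial in the $x_j$'s, since every monomial $x_j^{r^{h-j}}$ is an $r$-power map and the coefficients $\gamma_j^i$ lie in $\F_{q_1}\supseteq\F_r$. Consequently the variety $\mathbf{V}=\{\bx : f_1(\bx)=\dots=f_v(\bx)=0\}$ is an $\F_r$-linear subspace of $\overline{\F}_{q_1}^h$, and its $\F_{q_1}$-rational points $\mathbf{V}\cap\F_{q_1}^h$ form an $\F_r$-linear subspace of $\F_{q_1}^h$. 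This is the one genuinely new observation; everything else is bookkeeping.

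Next I would invoke the three bullets of Lemma~\ref{lem:4.2} in turn. The third bullet directly gives that $S := (\mathbf{V}\cap\F_{q_1}^h)^{\GL'/h}\subseteq\F_{q_1}^{\GL'}$ is $(a,(d_1)^a,\GL')_{q_1}$-subspace evasive for all $a\le v$; substituting $d_1 = r^{h-1}$ turns $(d_1)^a$ into $r^{a(h-1)}$, giving the claimed $(a,r^{a(h-1)},\GL')_{q_1}$-subspace evasive property. For the size: I must check that the hypothesis "$\ge v$ of the degrees $d_i$ are relatively prime to $q_1-1$" holds. Here $q_1 = r^{\Gl}$, so $q_1 - 1 = r^{\Gl}-1$ is coprime to $r$, hence coprime to every power of $r$; thus all $h$ degrees $d_j = r^{h-j}$ are coprime to $q_1-1$, and in particular at least $v$ of them are (assuming $h\ge v$, which follows from $h\approx v/\Ge$ with $\Ge<1$). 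The second bullet then yields $\abs{\mathbf{V}\cap\F_{q_1}^h} = q_1^{h-v}$, and taking the $(\GL'/h)$-fold product gives $\abs{S} = q_1^{(h-v)\GL'/h} = q_1^{(1-v/h)\GL'} = q_1^{(1-\Ge)\GL'}$, using $v/h = \Ge$.

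Finally I would record the $\F_r$-dimension and explicitness. Since $\mathbf{V}\cap\F_{q_1}^h$ is an $\F_r$-linear space, its size $q_1^{h-v} = r^{\Gl(h-v)}$ forces its $\F_r$-dimension to be $\Gl(h-v)$; the product $S$ over $\GL'/h$ copies then has $\F_r$-dimension $\Gl(h-v)\cdot\GL'/h = \Gl\GL'(1-\Ge) = \GL(1-\Ge)$, using $\GL = \Gl\GL'$. A basis is obtained by computing a basis of the single-block solution space $\mathbf{V}\cap\F_{q_1}^h$ — this is Gaussian elimination on the $v\times h$ system of $\F_r$-linear forms defined by $f_1,\dots,f_v$, which costs ${\rm poly}(h,v,\Gl,\log r) = {\rm poly}(\GL,\log r)$ — and then placing it block-diagonally across the $\GL'/h$ coordinate blocks. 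I do not anticipate a real obstacle here: the only points needing care are verifying $\gcd(d_j, q_1-1)=1$ (immediate since $q_1$ is a power of $r$) and the arithmetic $v/h = \Ge$, $\Gl\GL' = \GL$; the substance of the corollary is entirely inherited from Lemma~\ref{lem:4.2}.
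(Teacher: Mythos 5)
Your proof is correct and follows exactly the route the paper intends: the paper states that Corollary~\ref{col:1} "follows immediately from Lemma~\ref{lem:4.2} and the fact that when the $d_j$'s are powers of $r$, the polynomials $f_i$ defined in \eqref{eqn:dl} are $\F_r$-linearized," and you have simply unpacked that one-line remark, including the coprimality check $\gcd(r^{h-j},q_1-1)=1$, the size and dimension arithmetic via $v/h=\Ge$ and $\Gl\GL'=\GL$, and the block-diagonal basis computation.
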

Guruswami and Kopparty~\cite{Guru2013B} gives an explicit subspace design based on Wronskian determinant. Their construction implies the following fact.

\begin{lemma}\label{lem:4.4}\label{thm:subspace}
For $\epsilon\in (0,1)$, positive integer $v$  with $v<\epsilon \Lambda'/4$, there is an explicit collection of $M=q_{1}^{\Omega(\epsilon \Lambda'/v)}$ subspaces in $\F_{q_{1}}^{\Lambda'}$, each of codimension at most $\epsilon \Lambda'$ and form a $(v,2v/\epsilon,\GL')_{q_1}$-subspace design. Moreover, bases for $N\le M$ elements of this collection can be computed in time ${\rm poly}(N,\GL, r)$.
\end{lemma}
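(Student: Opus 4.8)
The statement is a specialization of the explicit Wronskian-based subspace design of Guruswami and Kopparty \cite{Guru2013B}, so the plan is to recall the construction, carry out the short degree count that produces the constant $2v/\epsilon$, and then read off all the remaining parameters; no new idea is needed beyond what \cite{Guru2013B} supplies. The only genuine obstacle lies in one step, which I flag at the end.

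Here is the basic block, which carries the whole idea. Identify $\F_{q_1}^{\Lambda'}$ with the space of polynomials $P(Y)\in\F_{q_1}[Y]$ of degree $<\Lambda'$ (harmless since $q_1=r^\lambda>\Lambda\ge\Lambda'$). For $\alpha\in\F_{q_1}$ put
\[
  G_\alpha=\bigl\{\,P(Y):\deg P<\Lambda',\ (Y-\alpha)^{\lceil\epsilon\Lambda'\rceil}\mid P(Y)\,\bigr\},
\]
an $\F_{q_1}$-subspace of codimension $\lceil\epsilon\Lambda'\rceil\le\epsilon\Lambda'$. Given any $W\subseteq\F_{q_1}^{\Lambda'}$ with $\dim_{\F_{q_1}}W=v$ and a basis $w_1,\dots,w_v$, its Hasse-derivative Wronskian $\mathrm{Wr}(Y)=\det\bigl(\mathrm{D}^{(i)}w_j(Y)\bigr)_{0\le i\le v-1,\,1\le j\le v}$ is a nonzero polynomial of degree at most $v\Lambda'$. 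If $\dim(G_\alpha\cap W)=d_\alpha$, then by the standard order-of-vanishing bound for Wronskians (the $d_\alpha$ basis vectors lying in $G_\alpha$ each vanish at $\alpha$ to order $\ge\epsilon\Lambda'$, and one loses at most $v-1$ per basis vector) the multiplicity of $\alpha$ as a root of $\mathrm{Wr}$ is at least $d_\alpha(\epsilon\Lambda'-v)$. Summing over the distinct $\alpha\in\F_{q_1}$ and comparing with $\deg\mathrm{Wr}$,
\[
  \sum_{\alpha\in\F_{q_1}}\dim_{\F_{q_1}}(G_\alpha\cap W)\le\frac{v\Lambda'}{\epsilon\Lambda'-v}=\frac{v}{\epsilon-v/\Lambda'}<\frac{2v}{\epsilon},
\]
the last inequality using the hypothesis $v<\epsilon\Lambda'/4$. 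Hence $\{G_\alpha\}_{\alpha\in\F_{q_1}}$ is a $(v,2v/\epsilon,\Lambda')_{q_1}$-subspace design, each member cut out by $\le\epsilon\Lambda'$ explicit $\F_{q_1}$-linear equations.

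To obtain the larger family of size $M=q_1^{\Omega(\epsilon\Lambda'/v)}$, one cascades this block design as in \cite{Guru2013B}: split the $\Lambda'$ coordinates into $\beta=\Theta(\epsilon\Lambda'/v)$ consecutive blocks of length $d=\Theta(v/\epsilon)$, install a copy of the block construction in each block, and index the resulting subspaces by $\beta$-tuples of field elements, one per block. This produces $q_1^{\beta}=q_1^{\Theta(\epsilon\Lambda'/v)}$ subspaces, each of codimension at most $\beta\cdot\epsilon d=\epsilon\Lambda'$. The one genuinely nontrivial point — and the reason \cite{Guru2013B} is needed rather than just the block estimate above — is that the cascaded family still satisfies $\sum_i\dim(H_i\cap W)\le 2v/\epsilon$ for every $v$-dimensional $W$; this is proved by tracking the block-projections of $W$ and feeding them into the single-block bound, with the cross-block interaction absorbed into the constant. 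Finally, for any prescribed $N\le M$ of the subspaces, the defining equations are read off directly from the chosen tuples and block boundaries, so a basis of each is computable in ${\rm poly}(N,\Lambda',\log q_1)={\rm poly}(N,\Lambda,r)$ time (recall $\Lambda'=\Lambda/\lambda$ and $\log q_1=\lambda\log r$), which gives all assertions. In short, the Wronskian block design is routine; the cascading and its additive analysis is the only step needing real care, and is exactly what \cite{Guru2013B} establishes.
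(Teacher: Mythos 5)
The paper offers no proof of this lemma: the text immediately preceding it says the statement follows from Guruswami and Kopparty's explicit subspace designs \cite{Guru2013B} and quotes the result as a black box, so your write-up is a reconstruction of the cited argument and must stand on its own. Your single-block Wronskian computation is correct (it is the $\ell=1$ degenerate form of the construction), but the cascading step is where the argument fails. You propose to partition the $\Lambda'$ coordinates into $\beta$ disjoint blocks of length $d$, run a one-point design in each block, and index subspaces by tuples $\vec\alpha$ via the Cartesian product $H_{\vec\alpha}=G^{(1)}_{\alpha_1}\times\cdots\times G^{(\beta)}_{\alpha_\beta}$. That family cannot be a $(v,2v/\epsilon,\Lambda')_{q_1}$-subspace design: take $W$ spanned by a single nonzero $w$ supported only on the first block; then for every $i\ge 2$ one has $\pi_i(w)=0\in G^{(i)}_{\alpha_i}$ for \emph{every} $\alpha_i$, so $w\in H_{\vec\alpha}$ for at least one admissible $\alpha_1$ and all $q_1^{\beta-1}$ choices of $(\alpha_2,\dots,\alpha_\beta)$, giving $\sum_{\vec\alpha}\dim(H_{\vec\alpha}\cap W)\ge q_1^{\beta-1}$. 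No amount of ``absorbing cross-block interaction into the constant'' repairs a bound that is off by a factor exponential in $\Lambda'$.

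The actual Guruswami--Kopparty construction never splits coordinates. It keeps each codeword as one polynomial of degree $<\Lambda'$, lowers the vanishing multiplicity from $\epsilon\Lambda'$ to $r=\Theta(v)$, and indexes the subspaces by pairwise coprime (e.g.\ monic irreducible) polynomials $E\in\F_{q_1}[Y]$ of degree $\ell=\Theta(\epsilon\Lambda'/v)$, with $H_E=\{f:\ \deg f<\Lambda',\ E^r\mid f\}$; thus the codimension is $r\ell\le\epsilon\Lambda'$ and $M\approx q_1^\ell/\ell=q_1^{\Omega(\epsilon\Lambda'/v)}$, which is where the exponent comes from. The Wronskian argument runs once, on a basis of $W$ as degree-$(<\Lambda')$ polynomials: if $\dim(H_E\cap W)=d_E$ then $\mathrm{Wr}$ vanishes to order at least $d_E(r-v+1)$ at every root of $E$, hence $E^{d_E(r-v+1)}\mid\mathrm{Wr}$, and coprimality of the $E$'s gives
\[
\sum_E \ell\, d_E(r-v+1)\ \le\ \deg\mathrm{Wr}\ \le\ v\Lambda',
\qquad\text{so}\qquad
\sum_E d_E\ \le\ \frac{v\Lambda'}{\ell(r-v+1)}\ =\ O\!\left(\frac{v}{\epsilon}\right),
\]
with the right choice of constants recovering $2v/\epsilon$ under $v<\epsilon\Lambda'/4$. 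The coprimality together with the single global Wronskian is precisely the mechanism you replaced by a coordinate partition; that replacement is the gap.
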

It is required in Lemma \ref{lem:4.4} that $q_1>\GL'$ (see \cite{Guru2013B}).  This condition is satisfied by our choice of parameters since $q_{1}=r^{\lambda}>\Lambda$.

Combined Lemma \ref{lem:4.4} with Corollary~\ref{col:1}, one can prove the following result.
\begin{prop}\label{prop:4.5}
For a positive integer $v\leq \epsilon\Lambda'/4$, there exists an explicit  $(v,2v(h-1)/\Ge,\GL)_r$-subspace design $\{H_1,H_2,\dots,H_N\}$ with $N=q_{1}^{\Omega(\epsilon \Lambda'/v)}$ and $H_i\subseteq \F_{q_{1}}^{\Lambda'}=\F_{r}^{\Lambda}$ of codimension at most $2\epsilon \Lambda$ .
\end{prop}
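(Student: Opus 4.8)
The plan is to compose the two ingredients already assembled: the $\F_r$-linearized subspace‑evasive set from Corollary~\ref{col:1} and the Wronskian‑based subspace design from Lemma~\ref{lem:4.4}, both living over the larger alphabet $\F_{q_1}$ with $q_1=r^\lambda$. The key observation is that $\F_{q_1}^{\Lambda'}$ and $\F_r^{\Lambda}$ are the same $\F_r$-vector space (since $\Lambda=\lambda\Lambda'$ and $\lambda=\log_r q_1$), so an object that behaves well with respect to $\F_{q_1}$-subspaces can be transported to an object behaving well with respect to $\F_r$-subspaces, as long as the relevant maps are $\F_r$-linear. Corollary~\ref{col:1} already guarantees that $S$ is an honest $\F_r$-linear space, precisely because the degrees $d_j$ were chosen to be powers of $r$, making the defining polynomials $\F_r$-linearized.

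First I would take the subspace design $\{D_1,\dots,D_N\}$ from Lemma~\ref{lem:4.4}: each $D_i\subseteq\F_{q_1}^{\Lambda'}$ has codimension at most $\epsilon\Lambda'$ over $\F_{q_1}$ and the collection is a $(v,2v/\epsilon,\Lambda')_{q_1}$-subspace design. Second, I would fix the single $\F_r$-linear subspace‑evasive set $S\subseteq\F_{q_1}^{\Lambda'}=\F_r^{\Lambda}$ from Corollary~\ref{col:1}, which is $(a,r^{a(h-1)},\Lambda')_{q_1}$-subspace evasive for all $a\le v$ and has $\F_r$-dimension $(1-\epsilon)\Lambda$. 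Third, I would intersect: set $H_i := D_i\cap S$ for $i=1,\dots,N$. Each $H_i$ is an $\F_r$-subspace of $\F_r^{\Lambda}$; its $\F_r$-codimension is at most $\mathrm{codim}_{\F_r}(D_i)+\mathrm{codim}_{\F_r}(S)\le \lambda\cdot\epsilon\Lambda' + \epsilon\Lambda = 2\epsilon\Lambda$, which gives the claimed codimension bound. The number of subspaces is unchanged, $N=q_1^{\Omega(\epsilon\Lambda'/v)}$.

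The heart of the argument is verifying the subspace‑design property over $\F_r$: for every $\F_r$-subspace $W\subseteq\F_r^{\Lambda}$ of dimension $v$, I need $\sum_{i=1}^N\dim_{\F_r}(H_i\cap W)\le 2v(h-1)/\epsilon$. The trick here — standard in this line of work — is to pass from the $\F_r$-structure to the $\F_{q_1}$-structure by considering the $\F_{q_1}$-span $W' := \F_{q_1}\text{-}\mathrm{span}(W)\subseteq\F_{q_1}^{\Lambda'}$, which has $\F_{q_1}$-dimension at most $v$. For each $i$, $H_i\cap W = D_i\cap S\cap W \subseteq D_i\cap W'$, and one first uses the subspace‑evasiveness of $S$ to bound $|S\cap W'|\le r^{a(h-1)}$ where $a=\dim_{\F_{q_1}}(D_i\cap W')\le v$; then, since $H_i\cap W$ is an $\F_r$-linear subspace of $S\cap W'$, we get $r^{\dim_{\F_r}(H_i\cap W)}\le |S\cap W'|\le r^{(\dim_{\F_{q_1}}(D_i\cap W'))(h-1)}$, i.e. $\dim_{\F_r}(H_i\cap W)\le (h-1)\dim_{\F_{q_1}}(D_i\cap W')$. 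Summing over $i$ and invoking the $(v,2v/\epsilon,\Lambda')_{q_1}$-subspace‑design property of $\{D_i\}$ applied to $W'$ gives $\sum_i\dim_{\F_r}(H_i\cap W)\le (h-1)\sum_i\dim_{\F_{q_1}}(D_i\cap W')\le (h-1)\cdot 2v/\epsilon = 2v(h-1)/\epsilon$, as required.

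I expect the main obstacle to be the bookkeeping in that last step — specifically, making the passage between $\F_r$-dimensions and $\F_{q_1}$-dimensions airtight (ensuring $a\le v$ so that the evasiveness hypothesis of Corollary~\ref{col:1} applies, and that replacing $W$ by its $\F_{q_1}$-span only helps), together with the explicitness and running‑time claims, which follow by combining the ${\rm poly}(\Lambda,\log r)$ construction of $S$ with the ${\rm poly}(N,\Lambda,r)$ construction of the $D_i$'s and noting that intersecting two explicitly given subspaces is a linear‑algebra operation of cost polynomial in the ambient dimension. The condition $q_1>\Lambda'$ needed in Lemma~\ref{lem:4.4} is already ensured by $q_1=r^\lambda>\Lambda\ge\Lambda'$, and $v\le\epsilon\Lambda'/4$ is exactly the hypothesis, so no new constraints arise.
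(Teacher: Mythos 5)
Your proposal takes exactly the same route as the paper: form $H_i = D_i \cap S$ where $\{D_i\}$ is the Wronskian subspace design over $\F_{q_1}$ from Lemma~\ref{lem:4.4} and $S$ is the $\F_r$-linear evasive set from Corollary~\ref{col:1}, pass from the $\F_r$-subspace $W$ to an $\F_{q_1}$-overspace $W'$ of $\F_{q_1}$-dimension at most $v$, invoke the design property over $\F_{q_1}$, and use evasiveness of $S$ to convert $\F_{q_1}$-dimensions back to $\F_r$-dimensions. The codimension and explicitness bookkeeping is also the same.

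There is, however, a slip in the middle of your dimension-conversion step. You write that evasiveness of $S$ gives $|S\cap W'|\le r^{a(h-1)}$ with $a=\dim_{\F_{q_1}}(D_i\cap W')$; but evasiveness applied to $W'$ only yields $|S\cap W'|\le r^{(\dim_{\F_{q_1}} W')(h-1)}$, an $i$-independent bound, which after summing over $i$ does not recover the claimed $2v(h-1)/\epsilon$. To get the $i$-dependent exponent $a$ you must apply evasiveness to the $a$-dimensional $\F_{q_1}$-subspace $D_i\cap W'$ itself. The correct chain is $H_i\cap W=(D_i\cap S)\cap W\subseteq S\cap(D_i\cap W')$ using $W\subseteq W'$, then $|H_i\cap W|\le|S\cap(D_i\cap W')|\le r^{a(h-1)}$ by the $(a,r^{a(h-1)},\Lambda')_{q_1}$-evasiveness of $S$ (valid since $a\le v$), hence $\dim_{\F_r}(H_i\cap W)\le(h-1)\,\dim_{\F_{q_1}}(D_i\cap W')$. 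With that correction the summation step goes through exactly as you wrote, and the argument coincides with the paper's.
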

\begin{proof} The proof of this proposition can be found in \cite[Theorem 3.6]{Guru2015} except for adjustment of parameters.
To convince the reader of that our parameters work properly, we give a complete proof  here. From Lemma~\ref{thm:subspace}, we can construct $M=q_{1}^{\Omega(\epsilon \Lambda'/s)}$ subspaces $V_1,V_2,\ldots,V_M$ with codimension at most $\epsilon \Lambda'$ over $\F_{q_{1}}$. By Corollary~\ref{col:1}, we know that there exists an explicit $\F_{r}$-linear space $S$ of size $q_{1}^{(1-\epsilon)\Lambda'}$ in $\F_{q_{1}}^{\Lambda'}$ which is $(a,h^{a(\ell-1)},\GL')_{q_1}$-subspace evasive for $a\leq v$.
Put $H_{i}=V_{i}\cap S$. Since both $V_{i}$ and $S$ has codimension at most $\epsilon\Lambda'$ in $F_{q_1}^{\GL'}$, the intersection $H_{i}$ has codimension at most $2\epsilon\Lambda'$ in $\F_{q_1}^{\GL'}$, i.e., $H_{i}$ has codimension  at most $2\epsilon\Lambda$ in $\F_{r}^{\GL}$. Let $W$ be a $v$-dimensional $\F_r$-linear subspace in $\F_{q_1}^{\GL'}$. Then one can find a $v$-dimensional $\F_{q_1}$-linear subspace $W_1$ in $\F_{q_1}^{\GL'}$ such that $W\subseteq W_1$.

The subspace design of $\{V_{i}\}_{i=1}^M$ implies that
\begin{equation}\label{eq:13}  \sum_{i=1}^{M}\dim_{\F_{q_{1}}}(V_{i}\cap W_1)\leq 2 v/\epsilon
\end{equation}
Denote by $v_{i}$ the dimension  $\dim_{\F_{q_{1}}}(V_{i}\cap W_1)$. As $\dim_{\F_{q_{1}}}(W_1)\leq v$, we have that $v_{i}\leq v$. Since $S$ is a $(v_i,r^{v_i(h-1)},\GL')_{q_1}$-subspace evasive set, we have $|S\cap(V_{i}\cap W_1)|\le r^{v_i(h-1)}$. Hence, $\dim_{\F_{r}}(H_{i}\cap W_1)\le v_i(h-1)=(h-1) \dim_{\F_{q_{1}}}(V_{i}\cap W_1)$.
Summing all dimensions up gives
\begin{equation*}
\sum_{i=1}^{M}\dim_{\F_{r}}(H_{i}\cap W)\le\sum_{i=1}^{M}\dim_{\F_{r}}(H_{i}\cap W_1)\le(h-1)  \sum_{i=1}^{M}\dim_{\F_{q_{1}}}(V_{i}\cap W_1)\leq 2 v(h-1)/\epsilon.
\end{equation*}
The proof is completed.
\end{proof}
\begin{theorem}{\bf [Part (i) of Main Theorem]}\label{thm:4.6} Let $r$ be a prime power and let $c$ be an integer between $1$ and $r-1$. Let $\widetilde{\Ge}>0$ be a small real. Let $q=r^{\ell}$ with $\gcd(r-1,\ell n)=1$. Then there exists an explicit rank-metric code in $\MM_{n\times(r-1)n}(\F_q)$ with rate $\widetilde{R}$ that is $(\widetilde{\tau}, O(\exp(1/\widetilde{\Ge}^2)))$-list decodable with $\widetilde{\tau}=\frac c{c+1}\left(1-\frac{r-1}{r-c}\times \widetilde{R}-\widetilde{\Ge}\right)$. Furthermore, encoding and list-decoding algorithms are in polynomial time ${\rm poly}(n,\exp(1/\widetilde{\Ge}))$.
\end{theorem}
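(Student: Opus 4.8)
The plan is to compose the code $\mC_q(n,k,m,r)$ of Section~2, taken with $s=r-1$ and $k=r-c$ so that \eqref{eq:11} applies, with a pre-encoding that forces the message coefficients into the blocks of a subspace design, and then to run the interpolation-based decoder of Section~3 followed by one subspace-intersection step. First I would fix parameters: given the target rate $\widetilde{R}$ and slack $\widetilde{\Ge}$, set $\Ge:=\widetilde{\Ge}\cdot\frac{r-c}{4(r-1)}$, choose $m:=\big\lfloor\frac{\widetilde{R}}{1-2\Ge}\cdot\frac{r-1}{r-c}\,n\big\rfloor$ (so $m$ stays proportional to $n$), put $\Lambda:=\ell nk$, and, after adjusting $\ell$ or padding $\Lambda$ slightly if necessary, arrange that $\Lambda$ has a divisor $\lambda\thickapprox2\log_r\Lambda$; then set $q_1:=r^{\lambda}$, $\Lambda':=\Lambda/\lambda$, $v:=s-1=r-2$, and $h\thickapprox v/\Ge$. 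For all large $n$ one has $v<\Ge\Lambda'/4$ and $q_1\ge h$, so Proposition~\ref{prop:4.5} supplies an explicit $(v,A,\Lambda)_r$-subspace design $H_1,\dots,H_N\subseteq\F_r^{\Lambda}$ with $A=2v(h-1)/\Ge=O(1/\widetilde{\Ge}^{2})$ (the implied constant depending only on $r$ and $c$), each $H_i$ of codimension at most $2\Ge\Lambda$ in $\F_r^{\Lambda}$, and $N=q_1^{\Omega(\Ge\Lambda'/v)}\ge n\ge m$.

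Next I would define the pre-encoded code to be the $\F_r$-subcode of $\mC_q(n,k,m,r)$ consisting of all $M_f$ for which $f=\sum_{i=0}^{m-1}f_i(x)y^{q^i}\in\mP_q(n,k,m)[x,y]$ has the coefficient vector of $f_i(x)$ --- identified with an element of $\F_r^{\Lambda}$ --- lying in $H_{i+1}$ for every $0\le i\le m-1$. This code is $\F_r$-linear of $\F_r$-dimension $\sum_{i=1}^{m}\dim_{\F_r}H_i\ge(1-2\Ge)m\Lambda$, hence of rate at least $\frac{k}{r-1}\cdot\frac mn(1-2\Ge)\ge\widetilde{R}$ by the choice of $m$ (trim to exactly $\widetilde{R}$ if desired), and by Lemma~\ref{lem:2.1} its minimum rank distance is still at least $n-m+1$. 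Encoding is the composition of an explicit linear embedding of an information word into $H_1\times\dots\times H_m$, the assembly of $f$ from its coefficient vectors, and the evaluation map $f\mapsto M_f$; a basis of each $H_i$ is produced by Proposition~\ref{prop:4.5} in time $\mathrm{poly}(n,\exp(1/\widetilde{\Ge}))$, so encoding is within the claimed time.

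For decoding I would set $e:=\big\lfloor\frac{s(r-k)(n-m+1)}{r-1+s(r-k)}\big\rfloor-1$, which satisfies $e<\frac{s(r-k)(n-m+1)}{r-1+s(r-k)}$, so Lemmas~\ref{lem:3.2}--\ref{lem:3.5} apply. On a received word $Y$ with $\rank(M_f-Y)\le e$: first compute an interpolation polynomial $Q\in\mathcal{L}$ by Lemma~\ref{lem:3.2} ($O(n^4)$ operations over $\F_{q^n}$); then solve \eqref{eq:3} coefficient by coefficient, where recovering $f_a(x)$ from \eqref{eq:7} reduces to finding all roots of an $\F_r$-linearized polynomial of $r$-degree at most $s-1$ over the field $\F_{q^n}[x]/(x^{r-1}-\Gg)\cong\F_{q^{n(r-1)}}$, each such root-finding being polynomial time. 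By Lemma~\ref{lem:3.5} together with Remark~\ref{rmk:2}, the candidate tuples $(f_0(x),\dots,f_{m-1}(x))$, viewed in $(\F_r^{\Lambda})^m$, form an explicitly computable $(s-1,\Lambda,m)_r$-periodic affine subspace $H$; since the transmitted message also lies in $H_1\times\dots\times H_m$, Lemma~\ref{lem:4.1} shows it lies in an affine $\F_r$-subspace of dimension at most $A$, which is returned as the (implicitly represented) decoding list, of size at most $r^{A}=\exp(O(1/\widetilde{\Ge}^{2}))$, a constant independent of $n$. For the radius, \eqref{eq:8}--\eqref{eq:11} give $\widetilde{\tau}=e/n\thickapprox\frac c{c+1}\big(1-\frac{r-1}{r-c}R\big)$ with $R=\frac{k}{r-1}\cdot\frac mn\le\frac{\widetilde{R}}{1-2\Ge}\le\widetilde{R}(1+4\Ge)$; substituting $\Ge=\widetilde{\Ge}\frac{r-c}{4(r-1)}$ yields $\widetilde{\tau}\ge\frac c{c+1}\big(1-\frac{r-1}{r-c}\widetilde{R}-\widetilde{\Ge}\big)$ for all large $n$.

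The conceptual heart --- that the decoder's candidate set carries the ``periodic'' structure of Definition~\ref{def:periodic-subspaces} --- has already been established in Section~3 (Lemma~\ref{lem:3.5} and Remark~\ref{rmk:2}), so the main obstacle here is the parameter bookkeeping: one must pick $\Ge$ as a function of $\widetilde{\Ge}$, $r$, $c$ so that the rate loss $2\Ge$ from the subspace-design pre-encoding and the decoding-radius loss from replacing $R$ by $\widetilde{R}/(1-2\Ge)$ are both absorbed into $\widetilde{\Ge}$; one must verify the feasibility hypotheses of Proposition~\ref{prop:4.5}, namely $v<\Ge\Lambda'/4$, $N\ge m$, and the existence of a divisor $\lambda\thickapprox2\log_r\Lambda$ of $\Lambda$ (the last possibly forcing a mild choice of $\ell$ or padding of $\Lambda$, with negligible effect on the rate); and one must confirm that every algorithmic step --- interpolation, the $m$ iterated linearized root-findings over $\F_{q^{n(r-1)}}$, computing bases of $H_1,\dots,H_m$, and the final intersection of $H$ with $H_1\times\dots\times H_m$ --- runs in $\mathrm{poly}(n,\exp(1/\widetilde{\Ge}))$ time. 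Each of these is individually routine; the work is in assembling them with clean constants.
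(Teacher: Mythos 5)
Your proof is correct and follows the same approach as the paper: restrict the coefficient polynomials $f_i$ to live in subspaces $H_i$ from an explicit $\F_r$-subspace design (Proposition~\ref{prop:4.5}), decode via interpolation to obtain a periodic affine subspace of candidates, and apply Lemma~\ref{lem:4.1} to cap the list size, with a final rescaling of $\Ge$ to absorb both the rate loss and the radius loss.

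The one genuine (harmless) deviation is the period: you take $\Lambda=\ell nk$ and invoke Remark~\ref{rmk:2}, so the subspace design lives directly in the degree-$\le k-1$ coefficient space, whereas the paper takes $\Lambda=n\ell(r-1)$ and invokes Lemma~\ref{lem:3.5}, building the design in the ambient residue ring $\F_{q^n}[x]/(x^{r-1}-\gamma)$ and then intersecting each $H_i$ with the degree-$\le k-1$ subspace to estimate the rate via $\dim_{\F_r}(H_i\cap\{\deg\le k-1\})\ge n\ell k-2\Ge n\ell(r-1)$. Your choice yields the marginally cleaner bound $\dim_{\F_r}H_i\ge(1-2\Ge)\ell nk$ and avoids the extra intersection step, at the cost of leaning on Remark~\ref{rmk:2} (which is stated without proof) rather than Lemma~\ref{lem:3.5}. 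Both routes give the same asymptotic radius and the same $\exp(O(1/\widetilde{\Ge}^2))$ list size, and the parameter bookkeeping ($v=s-1=r-2$, $h\thickapprox v/\Ge$, $q_1>\Lambda$, $v<\Ge\Lambda'/4$, $N\ge m$) checks out in either case.

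Two very small points worth being explicit about if you flesh this out: (a) when applying Lemma~\ref{lem:4.1} you need the periodic subspace and the subspace design to share the \emph{same} period $\Lambda$, so with your $\Lambda=\ell nk$ the justification must come from Remark~\ref{rmk:2}, not directly from Lemma~\ref{lem:3.5}; and (b) after the intersection, the returned affine subspace of dimension $\le A=2v(h-1)/\Ge$ must be enumerated (size $r^A$), and only those points giving codewords within rank distance $e$ of $Y$ are output — this enumeration is a constant-size task and does not affect the $\mathrm{poly}(n,\exp(1/\widetilde{\Ge}))$ claim.
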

\begin{proof} In Proposition \ref{prop:4.5}, we set $v=s-1$, $\GL=n\ell(r-1)$ and $h\thickapprox (s-1)/\Ge$. Each $H_i$ can be viewed as an $\F_r$-subspace of  the polynomial space $\{g(x)\in\F_{q^n}[x]:\;\deg(g(x))\le r-1\}$.

We consider the polynomial set
\[\widetilde{\mP}_q(n,k,m)[x,y]:=\left\{\sum_{i=0}^{m-1}f_i(x)y^{q^i}:\; f_i(x)\in H_i\ \mbox{and $\deg(f_i(x))\le k-1$ for all $0\le i\le m-1$}\right\}.\]
 and the code $\widetilde{\mC}_q(n,k,m,r)=\{M_f:\; f\in \widetilde{\mP}_q(n, k,m)[x,y]\}$. It is clear that $\widetilde{\mC}_q(n,k,m,r)$ is $\F_r$ -linear and it is a subcode of our original code ${\mC}_q(n,k,m,r)$. It is easy to see that
  \begin{equation}\label{eq:14}
\dim_{\F_r}(\widetilde{\mP}_q(n,k,m)[x,y])\ge \sum_{i=0}^{m-1}\dim_{\F_r}(H_i\cap\{f_i(x)\in\F_{q^n}[x]:\; \deg(f_i)\le k-1\})\ge m(n\ell k-2\Ge\GL).
 \end{equation}

  By \eqref{eq:14}, the rate $\widetilde{R}$ of $\widetilde{\mC}_q(n,k,m,r)$ is lower bounded by
 \begin{equation}\label{eq:15}
  \widetilde{R}=\frac{\log_q|\widetilde{\mP}_q(n,k,m)[x,y]|}{(r-1)n^2}\ge \frac{k}{r-1}\times\frac{m}{n}-2\Ge\times \frac mn\ge R-2\Ge.
 \end{equation}
Suppose a codeword $M_f$ with $f\in  \widetilde{\mP}_q(n,k,m)[x,y]$ was transmitted and $Y$ is received with at most $e$ errors, where $e< \frac{s(r-k)(n-m)}{r-1+s(r-k)}$. Then all list belong to the solution space $H$
of \eqref{eq:3} which is an $(s-1,\ell n(r-1),m)_r$-periodic subspace. By Lemma \ref{lem:4.1} and Proposition \ref{prop:4.5}, the list size for the code $\widetilde{\mC}_q(n,k,m,r)$ is $r^{O(s^2/\Ge^2)}=\exp(O(s^2/\Ge^2))=\exp(O(1/\Ge^2))$.

The decoding radius of $\widetilde{\mC}_q(n,k,m,r)$ is equal to those of ${\mC}_q(n,k,m,r)$. By \eqref{eq:11}, we have
\[\widetilde{\tau}=\tau\approx\frac c{c+1}\left(1-\frac{r-1}{r-c}\times R\right)\ge \frac c{c+1}\left(1-\frac{r-1}{r-c}\times \widetilde{R}-\frac{r-1}{r-c}\times 2\Ge\right)\]
for $1\le c\le r-2$. Setting $\widetilde{\Ge}=\frac{r-1}{r-c}\times 2\Ge$ gives the desired result.
\end{proof}

\begin{rmk}\label{rmk:5} In the code $\widetilde{\mC}_q(n,k,m,r)$, if we set $s\approx 4/\Ge^2$,  $r\approx 4/\Ge^2$ and $k/(r-1)=\Ge/2$, then one gets the list decoding radius $\widetilde{\tau}\approx 1-\widetilde{R}-\widetilde{\Ge}$. In this case, the list size is becomes  $(1/\widetilde{\Ge})^{O(\exp(1/\widetilde{\Ge}^4))}$. This proves Corollary \ref{cor:1.3}(i).
\end{rmk}

\subsection{A Monte Carlo algorithm}
We first define subspace evasive for a particular famyly of affine spaces.
\begin{defn}\cite{Guru2013}
Let $\mF$ be a family of affine subspace of $\F_r^{\Gk}$ and each of subspace in $\mF$ has dimension at most $v$. A subset $S\subset  \F_r^{\Gk}$ is called $(\mF,v,\Gk,L)_r$-evasive if $|S\cap W|\le L$ for every $W\in\mF$.
\end{defn}

Now we are able to state our randomized result. The {\bf HSE} map below is actually defined from hierarchical subspace-evasive sets (see \cite{Guru2012,Guru2013}).

\begin{prop}\label{prop:4.7}
Suppose $b,\Lambda,v,\Ga$ are positive integers  and $\zeta$ satisfies the conditions $b\geq (\Ga+1)/\zeta$ and $\Lambda >\frac{2s(\Ga+2)}{\zeta}$. Let $\mF$ be a family of $(v,\Lambda,b)$-periodic subspaces of $\F_{r}^{\Gk}$ with $|\mF|\leq r^{\Ga\Gk}$, where $\Gk=b\GL$. Then there exists a randomized construction of an injective map $\mathbf{HSE}$: $\F_{r}^{(1-2\zeta)\Gk}\rightarrow \F_{r}^{\Gk}$ in time $poly(m\Lambda,1/\zeta,\log r,v)$   such that with  probability at least $1-2^{\Omega(b\GL)}$, the image of $\mathbf{HSE}$ is an $(\mF,bv, \Gk, \frac{\Ga+1}{\zeta})$-subspace evasive set.  Further, given a $(v,\Lambda,b)$-periodic subspace $H\in \mF$, one can compute the set $\{\mathbf{x}\in \F_{r}^{(1-2\zeta)\Gk}:\; \mathbf{HSE}(\mathbf{x})\in H\}$ of size at most $\frac{\Ga+1}{\zeta}$ in deterministic $poly(m\Lambda,r^{v},1/\zeta)$ time.
\end{prop}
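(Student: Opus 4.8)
The plan is to instantiate the hierarchical subspace-evasive (h.s.e.) set construction of Guruswami \cite{Guru2012,Guru2013} with the parameters above. First I would split the $\Gk=b\GL$ coordinates of $\F_r^{\Gk}$ into $b$ consecutive blocks of length $\GL$, so that one block matches the period of the subspaces in $\mF$. The map $\mathbf{HSE}$ is then built block by block (more precisely, by a balanced recursion on the blocks): having fixed the images $\mv{y}_1,\dots,\mv{y}_{j-1}$ of the first $j-1$ blocks, the $j$-th block is an injective affine map $\F_r^{(1-2\zeta)\GL}\to\F_r^{\GL}$ whose image is a \emph{randomly shifted} copy of an explicit $\F_r$-linear subspace-evasive set of the type produced in Corollary~\ref{col:1}; all the shifts are derived from one short random seed, so that $\mathbf{HSE}$ and the partial inverses needed below are computable in $\mathrm{poly}(b\GL,1/\zeta,\log r,v)$ time, and the per-block factor $r^{(1-2\zeta)\GL}$ multiplies up to $|\Im(\mathbf{HSE})|=r^{(1-2\zeta)\Gk}$.

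Next I would isolate the one structural fact about $\mF$ that is used. If $H$ is a $(v,\GL,b)$-periodic subspace then, by Definition~\ref{def:periodic-subspaces}, there is a single space $W$ with $\dim_{\F_r}W\le v$ such that along every prefix the projection of $H$ onto the next block lies in a coset of $W$; iterating over the $b$ blocks gives $\dim_{\F_r}H\le bv$, i.e.\ $|H|\le r^{bv}$. The hypothesis $\GL>2s(\Ga+2)/\zeta$ (together with $v\le s$) forces $v<\zeta\GL/(2(\Ga+2))$, hence $bv<\zeta\Gk/(2(\Ga+2))$, which is negligible against the codimension $2\zeta\Gk$ of $\Im(\mathbf{HSE})$. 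Consequently, for a fixed such $H$ and a set of density $r^{-2\zeta\Gk}$, the chance that some $k$ distinct points of $H$ all lie in the set is at most $\binom{|H|}{k}r^{-2\zeta\Gk k}\le r^{-k(2\zeta\Gk-bv)}$; taking $k$ slightly above $(\Ga+1)/\zeta$ (which is $\le b$ by the hypothesis $b\ge(\Ga+1)/\zeta$) and union-bounding over $|\mF|\le r^{\Ga\Gk}$ leaves failure probability at most $r^{\Ga\Gk-\Omega(k\zeta\Gk)}\le r^{-\Omega(\Gk)}=2^{-\Omega(b\GL)}$, which yields $|\,\Im(\mathbf{HSE})\cap H\,|\le(\Ga+1)/\zeta$ for every $H\in\mF$. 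The work here is to verify that the randomly-seeded hierarchical construction is ``pseudorandom enough'' for this union bound, i.e.\ that the reduction to cosets of $W$ makes the relevant events deviate from the uniform-set model by less than the slack above; this is the content of the h.s.e.\ analysis of \cite{Guru2012,Guru2013}, where the periodicity of $H$ is used precisely to bound, independently of $b$, the probability that an entire length-$b$ sequence of block choices stays inside $H$.

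For the algorithmic claim I would, given $H\in\mF$, run a depth-$b$ tree search: maintain the surviving prefixes $(\mv{y}_1,\dots,\mv{y}_j)$ that both lie in the image of the first $j$ blocks of $\mathbf{HSE}$ and are consistent with $H$; to extend such a prefix to block $j+1$, read off from $H$ and the prefix the coset $W+\mv{v}$ in which the next block of $H$ must lie, intersect it with block $(j+1)$'s shifted evasive set, and retain the survivors. Each such step costs $\mathrm{poly}(\GL,r^v)$ (enumerate the $\le r^v$ points of the coset, or solve an $O(v)$-dimensional $\F_r$-linear system), and, applying the evasiveness estimate above to the projections $\proj_{[1,j\GL]}(\mF)$ (which are again periodic), the number of surviving prefixes at each level stays polynomially bounded — this is where the recursive/hierarchical structure of $\mathbf{HSE}$ is needed. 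Hence $\{\bx:\mathbf{HSE}(\bx)\in H\}$ is computed deterministically in $\mathrm{poly}(b\GL,r^v,1/\zeta)$ time and has size at most $(\Ga+1)/\zeta$.

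The main obstacle is the tension between the two demands on $\mathbf{HSE}$: it must be pseudorandom enough that the union bound over the exponentially large family $\mF$ survives the loss of only a $2\zeta$-fraction of the coordinates, yet structured enough that, for a given periodic $H$, its preimages can be enumerated block-by-block in time polynomial in $r^v$ rather than in $r^{bv}=|H|$. Reconciling these is exactly what the hierarchical subspace-evasive construction accomplishes; the delicate point in the analysis is bounding, for each $H\in\mF$ and independently of $b$, the probability that a full length-$b$ string of block choices remains inside the $v$-dimensional cosets prescribed by $H$, and it is the periodic structure — together with the quantitative conditions $b\ge(\Ga+1)/\zeta$ and $\GL>2s(\Ga+2)/\zeta$ — that lets the recursion close with the stated list size and running time.
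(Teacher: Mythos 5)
The paper does not actually prove Proposition~\ref{prop:4.7}: it is imported verbatim from prior work (the text just before it says ``The $\mathbf{HSE}$ map below is actually defined from hierarchical subspace-evasive sets (see \cite{Guru2012,Guru2013})''), and no proof environment follows the statement. So there is no in-paper argument to compare against; your reconstruction has to be judged against what the cited references actually do.

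Your sketch identifies the right structural facts (a $(v,\Lambda,b)$-periodic subspace has total dimension $\le bv$; the hypotheses $\Lambda>2s(\Ga+2)/\zeta$ and $b\ge(\Ga+1)/\zeta$ make $bv$ negligible against the codimension $2\zeta\Gk$ and make the union bound over $|\mF|\le r^{\Ga\Gk}$ close). But the concrete construction you propose — each block's image being a randomly \emph{shifted} copy of the explicit $\F_r$-linear subspace-evasive set from Corollary~\ref{col:1} — is not the h.s.e.\ construction and would not give the claimed list size $O(1/\zeta)$. A shifted copy of a fixed linear SE set $S$ intersected with a coset of a $v$-dimensional $W$ is (when nonempty) a coset of $S\cap W$, so per block you are stuck with the evasiveness bound $r^{v(h-1)}=r^{O(v^2/\Ge)}$ of Corollary~\ref{col:1}; accumulating over blocks, that is exactly the $\exp(O(s^2/\Ge^2))$ list size of the deterministic Theorem~\ref{thm:4.6}, not the $O(1/\zeta)$ of Proposition~\ref{prop:4.7}. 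Getting the small constant list requires the genuinely random h.s.e.\ object of \cite{Guru2012}, where the set is \emph{not} a product of translates of a fixed linear set but a pseudorandomly sampled subset with a hierarchical (binary-tree-of-scales) structure whose entire point is that, for a periodic $H$, the number of surviving prefixes at every intermediate level is a constant rather than growing as $r^{O(v(h-1))}$ per block. Your probabilistic calculation treats the image as a uniformly random set of density $r^{-2\zeta\Gk}$, which is the right intuition but is in tension with the shifted-linear-set construction you wrote down; reconciling the two is the actual content of the h.s.e.\ analysis you defer to. In short, the parameter bookkeeping is correct, the union-bound heuristic is correct, but the proposed construction is the wrong one (it is the ingredient for Proposition~\ref{prop:4.5}, not for Proposition~\ref{prop:4.7}), and the step ``verify the randomly-seeded hierarchical construction is pseudorandom enough'' is precisely the nontrivial part you would still need to do.
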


\begin{theorem}{\bf [Part (ii) of Main Theorem ]}\label{thm:4.8} Let $r$ be a prime power and let $c$ be an integer between $1$ and $r-1$. Let $\widehat{\Ge}>0$ be a small real. Let $q=r^{\ell}$ with $\gcd(r-1,\ell n)=1$. Then with high probability one can randomly sample a rank-metric code in $\MM_{n\times(r-1)n}(\F_q)$ with rate $\widehat{{R}}$ that is $(\widehat{{\tau}}, O(1/\widehat{\Ge}))$-list decodable with $\widehat{{\tau}}=\frac c{c+1}\left(1-\frac{r-1}{r-c}\times \widehat{R}-\widehat{\Ge}\right)$. Furthermore, encoding and list-decoding algorithms are in polynomial time ${\rm poly}(n,\exp(1/\widehat{\Ge}))$.
\end{theorem}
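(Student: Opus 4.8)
The plan is to follow the proof of Theorem~\ref{thm:4.6} almost verbatim, replacing the subspace design of Proposition~\ref{prop:4.5} by the randomized $\mathbf{HSE}$ map of Proposition~\ref{prop:4.7}. First I would fix the base code $\mC_q(n,k,m,r)$ with $s=r-1$, $k=r-c$, and $m/n$ chosen so that the pre-encoded rate obtained below equals the target $\widehat{R}$ (concretely $\frac{k}{r-1}\cdot\frac mn=\widehat R/(1-2\zeta)$). Identifying a message $f=\sum_{i=0}^{m-1}f_i(x)y^{q^i}$ (with $\deg f_i\le k-1$) with the $\F_r$-vector of the coefficients of $f_0,\dots,f_{m-1}$, the message space becomes $\F_r^{\Gk}$ with $\Gk=mk\ell n$, organised into $b=m$ blocks of period $\GL=k\ell n$. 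By Lemma~\ref{lem:3.5} and Remark~\ref{rmk:2}, for every received word $Y$ the list of message candidates produced by the interpolation procedure lies in an $(s-1,\GL,b)_r$-periodic subspace $H_Y\subset\F_r^{\Gk}$; set $\mF=\{H_Y:\ Y\in\MM_{n\times(r-1)n}(\F_q)\}$.

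Next I would bound $|\mF|$: each $H_Y$ depends only on the interpolation polynomial obtained from $Y$, so $|\mF|\le q^{(r-1)n^2}=r^{\ell(r-1)n^2}\le r^{\Ga\Gk}$ with $\Ga=\lceil(r-1)n/(mk)\rceil=\Theta(1/\widehat R)$, a constant. I would then apply Proposition~\ref{prop:4.7} with $v=s-1$, this $\GL$ and $b$, and $\zeta=\Theta(\widehat{\Ge})$; its numeric hypotheses $b=m\ge(\Ga+1)/\zeta$ and $\GL=k\ell n>2s(\Ga+2)/\zeta$ hold for all large $n$ since the right-hand sides are constants. This gives, with probability $1-2^{-\GO(m\GL)}$, an injective map $\mathbf{HSE}\colon\F_r^{(1-2\zeta)\Gk}\to\F_r^{\Gk}$ whose image is $(\mF,b(s-1),\Gk,(\Ga+1)/\zeta)_r$-subspace evasive. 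The Monte Carlo code $\widehat{\mC}$ is defined by pre-encoding: a message $\mv{u}\in\F_r^{(1-2\zeta)\Gk}$ maps to $M_f$, where the coefficient vector of $f$ is $\mathbf{HSE}(\mv{u})$. As a subcode of $\mC_q(n,k,m,r)$ it inherits the distance and decoding radius, and injectivity of $\mathbf{HSE}$ gives rate $\widehat{R}=(1-2\zeta)\cdot\frac{k}{r-1}\cdot\frac mn$.

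For decoding, given $Y$ with $\rank(M_f-Y)\le e$ and $e<\frac{s(r-k)(n-m+1)}{r-1+s(r-k)}$, I would run Lemmas~\ref{lem:3.2}--\ref{lem:3.5} to produce $H_Y\in\mF$ and then invoke the ``further'' clause of Proposition~\ref{prop:4.7} to compute $\{\mv{u}:\ \mathbf{HSE}(\mv{u})\in H_Y\}$ in time ${\rm poly}(m\GL,r^{s-1},1/\zeta)={\rm poly}(n,\exp(1/\widehat{\Ge}))$; this is the output list, it contains the transmitted message by Lemma~\ref{lem:3.5}, and it has size at most $(\Ga+1)/\zeta=O(1/(\widehat R\,\widehat{\Ge}))=O(1/\widehat{\Ge})$. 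Finally I would reuse the radius computation \eqref{eq:11}: with $s=r-1$, $k=r-c$ one gets $\widehat{\tau}=\tau\approx\frac c{c+1}\bigl(1-\frac{r-1}{r-c}R\bigr)$, and since $R\le\widehat R+2\zeta$, setting $\widehat{\Ge}=\frac{r-1}{r-c}\cdot 2\zeta$ yields $\widehat{\tau}=\frac c{c+1}\bigl(1-\frac{r-1}{r-c}\widehat R-\widehat{\Ge}\bigr)$, as claimed.

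The step I expect to be the main obstacle is the bookkeeping around $\mF$: one must check that at most $r^{\Ga\Gk}$ distinct periodic subspaces can arise with $\Ga$ a genuine constant, and then pin down the admissible choices of $m/n$ and $\zeta$ (in terms of the target $\widehat R$ and $\widehat{\Ge}$) so that both hypotheses of Proposition~\ref{prop:4.7} are satisfied while the final list size $(\Ga+1)/\zeta$ is forced down to $O(1/\widehat{\Ge})$. Everything else---the interpolation, the passage to the ``periodic'' structure, and the rate/radius arithmetic---is identical to the deterministic case of Theorem~\ref{thm:4.6}.
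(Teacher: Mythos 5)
Your proof is correct and follows the paper's overall strategy (pre-encode via the randomized $\mathbf{HSE}$ map and apply Proposition~\ref{prop:4.7}), but you choose the family $\mF$ differently from the paper, and it is worth noting what each choice buys. The paper takes $\mF$ to be \emph{all} $(s-1,n\ell k,m)_r$-periodic subspaces of $\F_r^{\Gk}$, bounds their number by $r^{2m\GL}$, and therefore sets $\Ga=2$, giving a list bound $(\Ga+1)/\zeta = 3/\zeta$ independent of the code's rate. You instead take $\mF=\{H_Y\}$ to be only those periodic subspaces actually produced by the decoder from a received word $Y$, so $|\mF|\le q^{(r-1)n^2}$ immediately, giving $\Ga=\lceil (r-1)n/(mk)\rceil=\lceil 1/R\rceil$. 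Your counting is cleaner (no need to enumerate periodic subspaces, which the paper does somewhat informally), but it pushes a factor $1/\widehat R$ into the list size and silently requires the interpolation step of Lemma~\ref{lem:3.2} to be \emph{deterministic} so that $Y\mapsto H_Y$ is a well-defined map fixed before the $\mathbf{HSE}$ construction; the paper's choice of $\mF$ sidesteps both issues. Since $\widehat R$ is a fixed constant in the theorem, your $(\Ga+1)/\zeta=O(1/(\widehat R\,\widehat\Ge))=O(1/\widehat\Ge)$ still matches the claimed list size, and the rate and radius arithmetic is identical, so the argument goes through. You should just state explicitly that the decoder picks a canonical $Q$ (e.g.\ fixed Gaussian-elimination pivoting), so that $\mF$ is a well-defined finite family of periodic subspaces to which Proposition~\ref{prop:4.7} can be applied.
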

\begin{proof} In Proposition \ref{prop:4.7}, set $v=s-1$, $b=m$ and $\GL=n\ell k$. Let $\mF$ be the set of all $(s-1,n\ell k,m)_r$-periodic subspaces in $\F_r^{mn\ell k}$. A periodic subspace $H\subseteq \F_r^{mn\ell k}$ consists of a fixed subspace $W\subseteq \F_r^{\GL}$ of dimension at most $s-1$ and affine space $\proj_{[(j-1)\Lambda+1,j\Lambda]}(H)=W+\bv_j$ with $\bv_j\in\F_{q^n}^{k}$ for $j=1,2,\dots,m$. Thus, there are at most $N_s\times r^{m\GL}$ periodic subspaces in $\mF$, where $N_s$ denotes the number of subspaces in $\F_r^{\GL}$ of dimension less than or equal to $s-1$. As $m$ tends to $\infty$ and $s$ is a constant, one clearly has
\[N_s=\sum_{i=0}^{s-1}{\GL\brack i}_r\le s{\GL\brack s-1}_r\le (s-1)r^{(s-1)\GL}\le r^{m\GL},\]
where ${\GL\brack i}_r$ denotes the Gaussian binomial coefficients that is equal to the number of subspaces of $\F_r^{\GL}$ of dimension $i$. Thus, in total we have $|\mF|\le r^{2m\GL}$.

In Proposition \ref{prop:4.7}, we set $\Ga=2$. Let $\mathbf{HSE}$ be the injective map given in Proposition \ref{prop:4.7}:  $\F_{r}^{(1-2\zeta)m\GL}\rightarrow \F_{r}^{m\GL}$. As $\F_{r}^{m\GL}\simeq {\mP}_q(n,k,m)[x,y]$, we can identify these two spaces under a fixed basis and hence $\mathbf{HSE}(\bx)$ can be viewed as  a polynomial in ${\mP}_q(n,k,m)[x,y]$. Now our encoding becomes
\[\F_{r}^{(1-2\zeta)m\GL}\rightarrow \F_{r}^{m\GL}\simeq {\mP}_q(n,k,m)[x,y]\rightarrow\MM_{n\times(r-1)n}(\F_q);\quad \bx\mapsto \mathbf{HSE}(\bx)\mapsto M_{\mathbf{HSE}(\bx)}.\]
Denote by $\widehat{{\mC}}_q(n,k,m,r)$ the image of the above map. Thus the rate of the code $\widehat{{\mC}}_q(n,k,m,r)$ is
\begin{equation}\label{eq:16}
\widehat{R}=\frac{\log_qr^{(1-2\zeta)m\GL}}{n^2(r-1)}=(1-2\Gz)\times \frac{k}{r-1}\times \frac mn=(1-2\Gz)R\ge R-2\Gz,
\end{equation}
where $R$ is the rate of ${{\mC}}_q(n,k,m,r)$.

Suppose a codeword $M_{\mathbf{HSE}(\bx)}$  was transmitted and $Y$ is received with at most $e$ errors, where $e< \frac{s(r-k)(n-m)}{r-1+s(r-k)}$. By Remark \ref{rmk:2}, $\mathbf{HSE}(\bx)$ belongs to an $(s-1,\GL,m)_r$-periodic subspace. By Proposition \ref{prop:4.7}, we obtain a list of solutions of size $O(1/\Gz)$. Furthermore, by  \cite{Guru2012} the list can be computed in time ${\rm poly}(n,r^{\Gz})$.

The decoding radius of $\widehat{{\mC}}_q(n,k,m,r)$ is the same as the one of ${{\mC}}_q(n,k,m,r)$. By \eqref{eq:11}, we have
\[\widehat{{\tau}}=\tau\approx\frac c{c+1}\left(1-\frac{r-1}{r-c}\times R\right)\ge \frac c{c+1}\left(1-\frac{r-1}{r-c}\times \widehat{R}-\frac{r-1}{r-c}\times 2\Gz\right)\]
for $1\le c\le r-2$. Setting $\widehat{\Ge}=\frac{r-1}{r-c}\times 2\Gz$ gives the desired result.
\end{proof}
\begin{rmk}\label{rmk:6} In the code $\widehat{{\mC}}_q(n,k,m,r)$, if we set $s\approx 4/\Ge^2$,  $r\approx 4/\Ge^2$ and $k/(r-1)=\Ge/2$, then one gets the list decoding radius $\widehat{\tau}\approx 1-\widehat{R}-\widehat{\Ge}$. The list size is  $O(1/\zeta)=O(1/\widehat{\Ge})$. This proves Corollary \ref{cor:1.3}(ii).
\end{rmk}

\end{document}